\newtheorem{theorem}{Theorem}
\newtheorem{lemma}{Lemma}
\theoremstyle{definition}
\newtheorem{defn}{Definition}
\newtheorem{exmp}{Example}
\newtheorem{cor}{Corollary}
\begin{document}

\title{Incentive Compatibility in Stochastic Dynamic Systems}

\author{Ke~Ma,~\IEEEmembership{Member,~IEEE,}
	P.~R.~Kumar,~\IEEEmembership{Fellow,~IEEE}
\thanks{This work was supported in part by NSF Contract ECCS-1760554, NSF Science \& Technology Center Grant CCF-0939370, the Power Systems Engineering Research Center (PSERC), and NSF Contract IIS-1636772.}
\thanks{Ke Ma is with Pacific Northwest National Laboratory, Richland, WA 99352 and P. R. Kumar is with Texas A\&M University, College Station, TX 77843-3259 (e-mail: ke.ma@pnnl.gov, prk.tamu@gmail.com)}
	}

\maketitle
\begin{abstract}
The classic Vickrey-Clarke-Groves (VCG) mechanism ensures incentive compatibility, i.e., truth-telling is a dominant strategy for all agents, for a static one-shot game. However, it does not appear to be feasible to construct
mechanisms that ensure dominance of dynamic truth-telling for agents comprised of general stochastic dynamic systems. The agents' intertemporal net utilities depend on future controls and payments, and a direct extension of the VCG mechanism does not guarantee incentive compatibility. This paper shows that such a stochastic dynamic extension does exist for the special case of Linear-Quadratic-Gaussian (LQG) agents. In fact it achieves subgame perfect dominance of truth telling. This is accomplished through a construction of a sequence of layered payments over time that decouples the intertemporal effect of current bids on future net utilities if system parameters are known and agents are rational.

An important motivating example arises in power systems where an Independent System Operator has to ensure balance of generation and consumption at all times, while ensuring social efficiency, i.e., maximization of the sum of the utilities of all agents. It is also necessary to satisfy budget balance and individual rationality. However, in general, even for static one-shot games, there is no mechanism that simultaneous satisfies these requirements while being incentive compatible and socially efficient. For a power market of LQG agents, we show that there is a modified ``Scaled'' VCG (SVCG) mechanism that does satisfy incentive compatibility, social efficiency, budget balance and individual rationality under a certain ``market power balance" condition where no agent is too negligible or too dominant. 

We further show that the SVCG payments converge to the Lagrange payments, defined as the payments that correspond to the true price in the absence of strategic considerations, as the number of agents in the market increases.

For LQ but non-Gaussian agents, optimal social welfare over the class of linear control laws is achieved.
\end{abstract}
\begin{IEEEkeywords}
Vickrey-Clarke-Groves (VCG) mechanism, dynamic VCG, stochastic dynamic systems, LQG agents, incentive compatibility, budget balance, individual rationality, social welfare optimality, Independent System Operator (ISO).
\end{IEEEkeywords}

\section{introduction}
Mechanism design is the sub-field of game theory that considers how to realize socially optimal solutions to problems involving multiple self-interested agents, each with a private valuation for each outcome. This valuation is generally represented as utility function, which captures the monetary value of each alternative. A typical approach is to provide financial incentives such as payments to promote truth-telling of utility function parameters by agents. An important example is the Independent System Operator (ISO) problem of electric power systems in which the ISO aims to maximize social welfare and maintain balance of generation and consumption while each generator/load has a private utility function. 

The classic Vickrey-Clarke-Groves (VCG) mechanism \cite{JOFI:JOFI2789}, and its generalization, the Groves mechanism \cite{10.2307/1914085}, play a central role in mechanism design since they ensure incentive compatibility, i.e., they ensure that truth-telling of utility functions by all agents forms a dominant strategy, as well as social welfare optimality, i.e., the sum of utilities of all agents is maximized. The outcome generated by the Groves mechanism is stronger than a Nash equilibrium in the sense that it is \emph{strategy-proof}, meaning truth-telling of utility functions is optimal irrespective of what others are bidding. In fact, Green and Laffont \cite{Green1979} show that the Groves mechanism is the only mechanism that is both efficient and strategy-proof if net utilities are quasi-linear, i.e., linear in the amount of money.

While the Groves mechanism is applicable to a static one-shot game, it does not work for stochastic dynamic games where agents are given the opportunity to bid utility functions and states at each time instant. In a stochastic dynamic system that unfolds over time, the agents' intertemporal net utilities depend on the future controls and payments, and a direct extension of the VCG mechanism does not guarantee incentive compatibility \cite{RePEc:cwl:cwldpp:1757r}. A fundamental difference between dynamic and static mechanism design is that in the former, an agent can bid an untruthful utility function conditional on its past bids (which need not be truthful) and past allocations (from which it can make an inference about other agents' utility functions) \cite{RePEc:cwl:cwldpp:2102}. For dynamic \emph{deterministic} systems, by collecting the VCG payments as a lump sum of all the payments over the entire time horizon at the beginning, incentive compatibility is still assured. However, for dynamic \emph{stochastic} systems, the states are private random variables and it is necessary to incentivize agents to bid their states truthfully. It does not appear to be feasible to construct mechanisms that ensure the dominance of dynamic truth-telling for agents comprised of general stochastic dynamic systems. Indeed we conjecture that it is not possible to do so for general stochastic dynamic agents.

Our contribution, on which we build further desirable results as described in the sequel, is to show that for the special case of Linear-Quadratic-Gaussian (LQG) agents, where agents have linear state equations, quadratic utility functions and additive white Gaussian noise, a dynamic stochastic extension of the VCG mechanism does exist, based on a careful construction of a sequence of layered payments over time. We propose a modified layered mechanism for payments that decouples the intertemporal effect of current bids on future net utilities, and prove that truth-telling of their dynamic states is a dominant strategy for every agent, if system parameters are known and agents are rational. ``Rationality" is defined in a dynamic programming fashion: an agent is rational at the last time instant if it adopts a dominant strategy whenever there exists a unique one; an agent is rational at time $t$ if it adopts a unique dominant strategy, assuming that all agents are rational at all future times. In fact the mechanism achieves subgame perfect dominance of truth-telling.

An important example of a problem needing such optimal dynamic coordination of stochastic agents arises in the ISO problem of power systems. Renewable energy resources such as solar/wind are stochastic and dynamic in nature, as are consumptions by loads which are influenced by factors such as local temperatures and thermal inertias of facilities. References \cite{6713053} and \cite{osti_15011696} provide details on how wind turbines can be modeled as LQG systems. In general, agents may have different approaches to responding to the prices set by the ISO. If each agent acts as a \emph{price taker}, i.e., it honestly discloses its energy consumption/production at the announced prices, a \emph{competitive equilibrium} would be reached among agents. However, if agents are \emph{price anticipators}, then it is critical for the ISO to design a market mechanism that is strategy-proof (i.e., incentive compatible). The challenge for the ISO is to determine a bidding scheme between agents (producers and consumers) and the ISO that maximizes social welfare, while taking into account the stochastic dynamic models of agents. 

Currently, the ISO solicits bids from generators and Load Serving Entities (LSEs) and operates two markets: a day-ahead market and a real-time market. The day-ahead market lets market participants commit to buy or sell wholesale
electricity one day before the operating day, to satisfy energy demand bids and
to ensure adequate scheduling of resources to meet the next
day's anticipated load. The real-time market lets market
participants buy and sell wholesale electricity during the
course of the operating day to balance the differences between the already pledged 
day-ahead commitments and the actual real-time demand and
production \cite{isone}. Our layered VCG mechanism fits perfectly in the real-time market since at each time instant the market clears and settles, as each agent bids its random state after realization.

However, there is also a potential fatal downside for the VCG mechanism: in general, the sum of total payments collected by the ISO is non-positive, i.e., there is a \emph{budget deficit}. In fact, when agents have quadratic utility functions, the total payments collected from consumers is indeed not enough to cover the total payments to the suppliers. In effect, in order to force agents to reveal their true utility functions, the ISO will need to subsidize the market. In this paper we will also propose a solution to this problem. The VCG payment charges each agent $i$ the difference between social welfare of others if agent $i$ is absent and social welfare of others when agent $i$ is present. We exhibit a solution that ensures there is no budget deficit. It consists of inflating the first term above in all of the agents' VCG payments by a constant factor $c$, leading to a Scaled VCG (SVCG) mechanism. 

There are however two additional issues to be addressed when proposing such a scheme. The first concerns the issue of \emph{individual rationality}. The \emph{net utility} that an agent obtains is the utility of energy consumption minus the amount it pays. The magnitude of the scaling factor $c$ is important because an agent may opt out of the process if $c$ is chosen to be too large. The reason lies in the fact that not joining the market results in a net utility of zero, obtained from generating/consuming no power and collecting/making no payments, while joining the market results in a negative net utility. That is, the scheme is not individually rational. The second issue concerns whether the payment
is \emph{Lagrange optimal} for each agent. In power markets, electricity prices are calculated as the sum of all Lagrange multipliers associated with different constraints, and payments are collected as the product of price and power quantity. By Lagrange optimality is meant the price that would manifest and the
payments that would occur if all agents were truth tellers. The concern is that if a customer participates, the price it pays/receives need not be Lagrange optimal. 

We show that under a ``market power balance" condition, which essentially requires that no agent is too negligible or too powerful, there is indeed a systematic way to choose the scaling factor number $c$ such that there is no budget deficit for the ISO, while at the same time guaranteeing that producers and consumers will actively participate in the market. The factor $c$ can be chosen in a way such that the maximum distortion between the VCG payment and Lagrange payment is minimized. We argue that based on historic knowledge of the market, the ISO may be able to choose such a $c$ that does not depend on any agent's tactical announcement. 

Moreover, we show that asymptotically, as the number of agents increases, the Scaled VCG payments converge to the Lagrange optimal payments. This result provides an economic justification for Load Serving Entities, which aggregate a large number of small consumers into one reasonable sized consumer, as being required to achieve social welfare optimality.

In Section \ref{RWCDC}, a survey of related works is presented. Section \ref{SDCDC} provides a description of the classic VCG framework for the static and dynamic deterministic problems. It also introduces the Scaled VCG mechanism for individual rationality and budget balance. Section \ref{SCDC} presents the difficulties in designing a mechanism that ensures dominance of truth telling when agents are general stochastic dynamic systems. Section \ref{LQGS} presents the layered mechanism for LQG systems. It shows how intertemporal decoupling is achieved, and proves subgame perfect dominance of truth telling. It also shows that the budget balance and individual rationality properties of the scaling mechanism carry over from the deterministic case. A numerical example is presented. Section \ref{NG} shows how the mechanism also applies when the noises are not Gaussian, to achieve optimality in the class of linear feedback strategies. Section \ref{CRCDC} concludes the paper.

\section{Related Works}\label{RWCDC}

In recent years, several works have explored issues arising in dynamic mechanism design. In order to achieve ex-post incentive compatibility, Bergemann and Valimaki \cite{RePEc:cwl:cwldpp:1616} propose a generalization of the VCG mechanism based on the marginal contribution of each agent and show that ex-post participation constraints are satisfied under some conditions. Athey and Segal \cite{ECTA:ECTA1375} consider an extension of the d'Aspremont-Gerard-Varet (AGV) mechanism \cite{DASPREMONT197925} to design a budget balanced dynamic incentive compatible mechanism. Pavan et al. \cite{Pavan2009Dynamic} derives first-order conditions under which incentive compatibility is guaranteed by generalizing Mirrlees's \cite{10.2307/2296779} envelope formula of static mechanisms. Cavallo et al. \cite{DBLP:journals/corr/CavalloPS12} considers a dynamic Markovian model and derives a sequence of Groves-like payments which achieves Markov perfect equilibrium. Bapna and Weber \cite{Bapna2005EfficientDA} solves a sequential allocation problem by formulating it as a multi-armed bandit problem. Parkes and Singh \cite{NIPS2003_2432} and Friedman and Parkes \cite{Friedman:2003:PWS:779928.779978} consider an environment with randomly arriving and departing agents and propose a ``delayed'' VCG mechanism to guarantee interim incentive compatibility. Besanko et al. \cite{BESANKO198533} and Battaglini et al. \cite{RePEc:pri:metric:wp046_2012_battaglini_lamba_optm_dyn_contract_10october2012_short.pdf} characterize the optimal infinite-horizon mechanism for an agent modeled as a Markov process, with Besanko considering a linear AR(1) process over a continuum of states, and Battaglini focusing on a two-state Markov chain. Bergemann and Pavan \cite{BERGEMANN2015679} have an excellent survey on recent research in dynamic mechanism design. A more recent survey paper by Bergemann and Valimaki \cite{RePEc:cwl:cwldpp:2102} further discusses the dynamic mechanism design with risk-averse agents and the relationship between dynamic mechanism and optimal contracts.  

In order to capture strategic interactions between the ISO and market participants, game theory and mechanism design has been proposed in many recent papers. Sessa et al. \cite{DBLP:journals/corr/SessaWK16} studies the VCG mechanism for electricity markets and derives conditions to ensure collusion and shill bidding are not profitable. Okajima et al. \cite{7320631} propose a VCG-based mechanism that guarantees incentive compatibility and individual rationality for day-ahead market with equality and inequality constraints. Xu et al. \cite{7297853} shows that the VCG mechanism always results in higher per-unit electricity prices than the locational marginal price (LMP) mechanism under any given set of reported supply curves, and that the difference between the per-unit prices resulting from the two mechanisms is negligibly small. Bistarelli et al. \cite{DBLP:journals/corr/BistarelliCGM16} derives a VCG-based mechanism to drive users in shifting energy consumption during peak hours. In Samadi et al. \cite{6102349}, it is proposed that utility companies use VCG mechanism to collect private information of electricity users to optimize the energy consumption schedule. Taylor et al. \cite{6669505} formulates the regulation pricing policy as an LQR problem and applies the VCG mechanism to induce honest participation.

There are also some related works aiming at achieving budget balance for VCG mechanism. Parkes et al. \cite{Parkes:2001:ABV:1642194.1642250} designs a budget-balanced and individual-rational mechanism for combinatorial exchanges that sacrifices incentive compatibility. Moulin et al. \cite{Moulin2001} discusses the trade-off between budget balance and efficiency of the mechanism. Cavallo \cite{Cavallo2006} uses domain information regarding agent valuation spaces to achieve redistribution of much of the required transfer payments back among the agents. Similarly, Thirumulanathan et al. \cite{Thirumulanathan2017} propose a mechanism that is efficient and comes close to budget balance by returning much of
the payments back to the agents in the form of rebates. In \cite{Ma2014}, an enhanced (Arrow-d’Aspremont-Gerard-Varet) AGV mechanism is proposed to tackle the problem of budget balance in demand side management. Karaca et al. \cite{DBLP:journals/corr/abs-1811-09646} proposes core-selecting mechanisms that are coalition-proof and budget balanced, but only with approximate incentive compatibility, i.e., the sum of potential profits of each bidder from a unilateral deviation is minimized. Tanaka et al. \cite{8430852} shows that payments calculated by the clearing-price mechanism and VCG mechanism are similar when each individual agent's market power is negligible. In \cite{NBERw23771}, the authors propose a criterion of approximate strategy-proofness called strategy-proofness in the large (SP-L) for a large market identifying a Lagrange multiplier based mechanism as SP-L. Our work, on the other hand, shows that payments defined in the strategy-proof VCG mechanism converges to Lagrange payments if the market is sufficiently large.

The problem of how to conduct bidding to achieve social welfare optimality in stochastic dynamic systems is examined in \cite{8274944}. It assumes that all agents are price-takers.

A preliminary announcement of some of these results was presented in the conference paper \cite{DBLP:journals/corr/abs-1803-06734}. The layered payment structure for LQG systems is mentioned there, and incentive compatibility results are presented without proofs. The present paper contains the complete proof of incentive compatibility, and further introduces the Scaled VCG mechanism for budget balance, individual rationality, and Lagrange optimality.

To our knowledge, there does not appear to be any result that ensures dominance of dynamic truth-telling for agents comprised of LQG systems, let alone ensuring no budget deficit for the ISO, and individual rationality for all agents.
\section{Static and Dynamic Deterministic Systems}\label{SDCDC}

\subsection{Static Deterministic Systems} 
We begin with the simpler static deterministic case. There are $N$ agents, each having a utility function $F_{i}(u_{i})$, where $u_{i}$ is the amount of energy produced/consumed by agent $i$, with $u_{i}\le 0$ for a producer and $u_{i}\ge 0$ for a consumer. Let $\boldsymbol{u}:=(u_{1},...,u_{N})^{T}$, $\boldsymbol{u}_{-i}:=(u_{1},...,u_{i-1},u_{i+1},...,u_{N})^{T}$. 
The ISO wishes to maximize the \emph{social welfare}, $\sum_{i} F_{i}(u_{i})$: 
\begin{subequations}\label{utilitymax}
\begin{align}
\max_{\boldsymbol{u}}\sum_{i} F_{i}(u_{i}), \tag{\ref{utilitymax}}\end{align} 
\begin{equation}
\text{subject to } \sum_{i} u_{i}=0. \label{umaxcon}
\end{equation}
\end{subequations}
Without loss of generality we only include power balance constraint \eqref{umaxcon}. Other constraints such as DC power flow constraints and capacity constraints can be added and will not change the mechanism structure.

The ISO does not know the individual utility functions of the agents. If it asks them to disclose their utility functions, they may lie in order to obtain a better allocation. A solution to this problem of ``truth-telling" is provided by the VCG mechanism which asks each agent to bid its utility function. Denote agent $i$'s bid by $\hat{F}_{i}$ and let $\hat{F} := (\hat{F}_{1}, \ldots ,\hat{F}_{n})$. After obtaining the bids, the ISO calculates $\boldsymbol{u^{*}}(\hat{F})$ as the optimal solution to:
\begin{equation*}
\max_{\boldsymbol{u}}\sum_{i}\hat{F}_{i}(u_{i}),\text{ subject to }\eqref{umaxcon}.
\end{equation*}
Each agent is then assigned to produce/consume $u_{i}^{*}(\hat{F})$, accruing a utility $F_{i}\left(u_{i}^{*}\left(\hat{F}\right)\right)$. Following the rule that it had announced a priori before receiving the bids, the ISO collects a \emph{payment} $p_{i}(\hat{F})$ from agent $i$, defined as:
\begin{equation*}
p_{i}(\hat{F}):=\sum_{j\ne i}\hat{F}_{j}(\boldsymbol{u}^{(i)})-\sum_{j\ne i}\hat{F}_{j}(\boldsymbol{u^{*}}(\hat{F})),
\end{equation*} 
\noindent where $\boldsymbol{u}^{(i)}$ is defined as the optimal solution to:
\begin{equation*}
\max_{\boldsymbol{u}_{-i}}\sum_{j\ne i}\hat{F}_{j}(u_{j})\text{, subject to }\sum_{j\ne i}u_{j}=0.
\end{equation*}
The VCG mechanism is a special case of the Groves mechanism \cite{10.2307/1914085}, where payment $p_{i}$ is defined as:
\begin{equation*}
p_{i}(\hat{F}):=h_{i}(\hat{F}_{-i})-\sum_{j\ne i}\hat{F}_{j}(\boldsymbol{u}^{*}(\hat{F})),
\end{equation*}
where $h_{i}$ is any arbitrary function of $\hat{F}_{-i}:=(\hat{F}_{1},..,\hat{F}_{i-1},\hat{F}_{i+1},...,\hat{F}_{N})$. 

Define the ``net utility" of an agent as the utility derived by it minus its payment. Truth-telling is a \emph{dominant strategy} in the Groves mechanism \cite{10.2307/1914085}, i.e., each agent maximizes its net utility by bidding its true utility function, regardless of what other agents bid:
\begin{align*}
F_{i}(u^{*}(\bar{F})) - p_i(\bar{F}) \geq &F_{i}(u^{*}(\hat{F})) - p_i(\hat{F}),\text{ for all }\hat{F}_{i},
\end{align*}
where $\bar{F}:=(\hat{F}_{1}, \hat{F}_{2}, \ldots , \hat{F}_{i-1}, F_{i} , \hat{F}_{i+1}, \ldots , \hat{F}_{N})$.

\theoremstyle{definition}
\begin{defn}
	A mechanism is \emph{incentive compatible (IC)} if truth-telling is a dominant strategy for every agent.
\end{defn}
One should note that an agent may not necessarily tell the truth even if truth-telling is dominant since there may be another strategy that is also dominant. We will assume that every agent is ``rational," in that if its dominant strategy is \emph{unique}, then an agent will indeed tell the truth.
\begin{defn}
	We call a mechanism \emph{efficient (EF)} if the resulting allocation $\boldsymbol{u^{*}}$ maximizes the social welfare $\sum_{i}F_{i}(u_{i})$.
\end{defn} 

Two more important properties are sought in a solution.
\begin{defn}
	A mechanism is ex-post \emph{individually rational (IR)} if agents are guaranteed to gain a nonnegative \emph{net utility} by participating, that is, $F_{i}(u_{i}^{*})-p_{i}\geq 0$ (because, by abstaining, an agent can always realize a net utility of zero).
\end{defn}
\begin{defn}
	A mechanism satisfies \emph{budget balance (BB)} if the total payment made by agents is nonnegative: $\sum_{i}p_{i}\ge 0$. (We use the descriptor ``budget balance" if the ISO does not have to provide a subsidy, rather than strictly requiring that the total payments are exactly zero).
\end{defn}
The VCG mechanism, in general, does not satisfy
BB. In fact, Green and Laffont \cite{Green1979} show that no mechanism can satisfy all four properties --IC, EF, IR \& BB-- at the same time. 

If the ISO knew the true utility functions of all the agents, it could solve the social welfare problem in a centralized manner: calculate the Lagrange multiplier $\lambda^{*}$ (price) for the constraint \eqref{umaxcon}, and collect a payment equal to $\lambda^{*}u_{i}^{*}$ from agent $i$. We call this the \emph{Lagrange payment}:
\begin{defn}
	Suppose the optimal solution $(\lambda^{*},u^{*})$ to \eqref{utilitymax} is unique. We say that a mechanism is \emph{Lagrange Optimal} if the payment $p_{i}$ collected from agent $i$ is equal to $\lambda^{*}u_{i}^{*}$.
\end{defn}

We note that the Lagrange payment scheme is widely used in current ISO operation, since the ISO solves the social welfare problem with approximate DC power flow. When non-convexities such as AC power flow are incorporated, the Lagrange payment is seldom used because of the duality gap. It is also worth noting that with only constraint \eqref{umaxcon}, the sum of all Lagrange payments is exactly zero. However, budget balance generally does not hold when other constraints such as DC power flow are included\cite{LiTesfatsion2011, Alderete2005}.

We show in the sequel that while there is no mechanism that satisfies all four properties (IC, EF, IR and BB) \emph{in general}, there does exist such a mechanism under a certain ``Market Power Balance" condition. We inflate (or deflate) the first term in the standard VCG mechanism by a \emph{constant} factor $c$:
\begin{equation}\label{cdefn}
p_{i}(\hat{F})=c\cdot \sum_{j\ne i}\hat{F}_{j}(\boldsymbol{u}^{(i)})-\sum_{j\ne i}\hat{F}_{j}(\boldsymbol{u^{*}}).
\end{equation}
We call this a Scaled VCG (SVCG) mechanism, and $c$ as the scaling factor. To achieve BB and IR, one could choose $c$ as a function of the bids $\hat{F}_{i}$, which unfortunately would cease to guarantee incentive compatibility since the first term in $\eqref{cdefn}$ is not allowed to be dependent on $\hat{F}_{i}$ in the Groves mechanism. 

We show below that there is a range of values of $c$ that can ensure BB under a certain market power balance condition, and argue that through its long-term operation, the ISO may be able to learn at least a subset of this range. Our presumptive argument rests on the repetitive nature of this problem which is played out every day, allowing the ISO to be able to tune $c$ to avoid a net subsidy. Based on this experience, the ISO could choose a $c$ for which BB and IR hold at the same time. However, agents may play a game on the constant $c$ after interacting with the ISO for an extended period of time. Incentives or schemes to prevent such play, which occurs over a slower time-scale, are an important future research direction.

\begin{theorem}\label{c}
Let $\boldsymbol{u}^{*}$ be the optimal solution to \eqref{utilitymax} and suppose that $u^{*}$ is unique. Suppose that $\boldsymbol{u}^{(i)}$ is the unique optimal solution to:
\begin{equation*}
\max\ \sum_{j\ne i}F_{j}(u_{j}), \text{ subject to }\sum_{j\ne i}u_{j}=0.
\end{equation*}

	Let $H_{i}:=\sum_{j\ne i}F_{j}(\boldsymbol{u}^{(i)})$, and let $H_{max}:=\max_{i}H_{i}$. Let $F_{total}=\sum_{j}F_{j}(\boldsymbol{u}^{*})$. If $F_{total}>0$, $H_{i}>0$ for all $i$, and the following \emph{Market Power Balance (MPB)} condition holds:
\begin{equation}\label{Market power balance}
(N-1)H_{max}\le \sum_{i}H_{i},
\end{equation}
then there exists an interval $[\underline{c},\bar{c} ]$ such that for any $c$ in this interval, the SVCG mechanism satisfies IC, EF, BB and IR at the same time.
\end{theorem}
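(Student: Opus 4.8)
The plan is to handle the four properties in two groups: IC and EF hold for \emph{every} constant $c$ simply because \eqref{cdefn} is of Groves form, and then BB and IR, evaluated at the truthful profile, cut out the admissible range of $c$.

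For IC, the term $c\cdot\sum_{j\ne i}\hat{F}_{j}(\boldsymbol{u}^{(i)})$ in \eqref{cdefn} depends only on $\hat{F}_{-i}$, since $\boldsymbol{u}^{(i)}$ is determined by $\hat{F}_{-i}$ alone and $c$ is a fixed constant; it is therefore an admissible choice of the function $h_{i}(\hat{F}_{-i})$ in the Groves payment, so truth-telling of $F_{i}$ is a dominant strategy for every agent by the Groves theorem \cite{10.2307/1914085}. The constancy of $c$ is exactly what is used here: if $c$ were allowed to depend on $\hat{F}_{i}$, that term would no longer be a function of $\hat{F}_{-i}$ only and dominance could fail. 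Combined with the standing rationality assumption and the uniqueness hypotheses on $\boldsymbol{u}^{*}$ and $\boldsymbol{u}^{(i)}$ (which make the dominant strategy the unique one, so rational agents bid truthfully), the realized allocation is $\boldsymbol{u}^{*}$, which maximizes $\sum_{i}F_{i}(u_{i})$ by construction; this gives EF.

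The core is then a short computation at the truthful profile $\hat{F}=F$. Writing $\sum_{j\ne i}F_{j}(\boldsymbol{u}^{*})=F_{total}-F_{i}(u_{i}^{*})$ and substituting into \eqref{cdefn} gives
\begin{align*}
F_{i}(u_{i}^{*})-p_{i}&=F_{total}-cH_{i},\\
\sum_{i}p_{i}&=c\sum_{i}H_{i}-(N-1)F_{total},
\end{align*}
the second identity using $\sum_{i}\bigl(F_{total}-F_{i}(u_{i}^{*})\bigr)=(N-1)F_{total}$. Hence IR, i.e. $F_{total}-cH_{i}\ge 0$ for all $i$, is equivalent (since $H_{i}>0$ and $F_{total}>0$) to $c\le F_{total}/H_{max}=:\bar{c}$, and BB, i.e. $\sum_{i}p_{i}\ge 0$, is equivalent (since $\sum_{i}H_{i}>0$) to $c\ge (N-1)F_{total}/\sum_{i}H_{i}=:\underline{c}$. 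Both bounds are positive, and $\underline{c}\le\bar{c}$ holds if and only if, after cancelling the positive factor $F_{total}$, $(N-1)H_{max}\le\sum_{i}H_{i}$ — which is precisely the MPB condition \eqref{Market power balance}. Thus $[\underline{c},\bar{c}]$ is nonempty and any $c$ in it makes the SVCG mechanism IC, EF, BB and IR at once.

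The obstacle here is conceptual rather than computational: the claim is not that \emph{some} scaling repairs BB and IR agent-by-agent, but that a single interval of constants works for all agents simultaneously, which is why $H_{max}$ (a max over $i$, coming from IR) and $\sum_{i}H_{i}$ (a sum over $i$, coming from BB) end up on opposite sides of the feasibility inequality; checking that this reduces exactly to \eqref{Market power balance} is the one place to be careful. The rest — the two displayed identities, in particular the coefficient $N-1$ — is elementary bookkeeping that I would just verify for index and sign slips.
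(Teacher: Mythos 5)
Your proof is correct and follows essentially the same route as the paper's: IC and EF follow because a constant $c$ keeps the payment in the Groves class, and the BB and IR inequalities at the truthful profile yield exactly the bounds $\underline{c}=(N-1)F_{total}/\sum_{i}H_{i}$ and $\bar{c}=F_{total}/H_{max}$, whose compatibility is the MPB condition. Your version is slightly more explicit than the paper's about why the scaled first term is an admissible $h_{i}(\hat{F}_{-i})$ and about cancelling the positive factor $F_{total}$ to recover \eqref{Market power balance}, but the substance is identical.
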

\begin{proof}
	With $c$ chosen as a constant, the SVCG mechanism is within the Groves class and thus satisfies IC and EF.
	To achieve budget balance, we need
	\begin{equation*}
	\sum_{i}p_{i}=c\sum_{i}H_{i}-(N-1)F_{total}\ge 0.
	\end{equation*}
	To achieve individual rationality for agent $i$, we also need
	\begin{equation*}
	F_{i}(\boldsymbol{u}^{*})-p_{i}=F_{i}(\boldsymbol{u}^{*})-c\cdot H_{i}+\sum_{j\ne i}F_{i}(\boldsymbol{u}^{*})\ge 0.
	\end{equation*}
	Combining both the inequalities, we need to be able to choose a $c$ such that
	\begin{equation}\label{cbound}
	\frac{(N-1)F_{total}}{\sum_{i}H_{i}}\le c\le\frac{F_{total}}{H_{max}}.
	\end{equation}
	Let $\underline{c}:=\frac{(N-1)F_{total}}{\sum_{i}H_{i}}$, and $\bar{c}:=\frac{F_{total}}{H_{max}}$. Such a $c$ exists if 
	\begin{equation*}
	(N-1)H_{max}\le \sum_{i}H_{i},\ F_{total}>0,\ H_{i}>0,\ \forall i.\qedhere
	\end{equation*}
\end{proof}

The critical condition \eqref{Market power balance} can be interpreted as requiring that no agent has significantly bigger or smaller market power than others. Individual residential load customers generally have a much smaller scale compared to power plants, and it is thus beneficial to form load aggregators or utility companies at the consumer side, as suggested by the MPB condition. This provides an economic justification for the role of load aggregators or load serving entities to guarantee the achievement of social welfare maximization. The MPB condition is one sufficient condition for budget balance and individual rationality that we have been able to identify. It would be of interest to determine if there are looser conditions than \eqref{Market power balance} to guarantee BB and IR.

In general, an SVCG mechanism is however not Lagrange optimal. Within the feasible range $[\underline{c},\bar{c}]$, one may prefer to choose a $c$ that also achieves near-Lagrange optimality. This could be formulated as the following MinMax problem:
\begin{equation*}
\min_{c}\max_{i}\ |d_{i}(c)|,\text{ subject to \eqref{cbound}},
\end{equation*} 
where $d_{i}(c):=\lambda^{*}u_{i}^{*}-p_{i}=\lambda^{*}u_{i}^{*}-c\cdot H_{i}+\sum_{j\ne i}F_{j}(\boldsymbol{u}^{*})$. 

\begin{exmp}
	All agents have quadratic utility functions: $F_{i}=r_{i}u_{i}^{2}+s_{i}u_{i}$. $(r_{1},r_{2},r_{3},r_{4})=(-1,-1.1,-1.2,-1.1)$ and $(s_{1},s_{2},s_{3},s_{4})=(1,1.2,4,5)$. The unique Lagrange optimal solution is $\boldsymbol{u}^{*}=(-0.86,-0.70,0.53,1.03)$, $\lambda^{*}=2.73$, and from \eqref{cbound}, $1.13\le c\le 1.19$. The optimal solution to the MinMax problem is $(c^{*},Z^{*})=(1.14,0.22)$. By choosing $c=1.14$, the SVCG mechanism satisfies IC, EF, BB and IR, and the maximum discrepancy between VCG and Lagrange payments is 0.22.
	
	In the MinMax problem, one can also replace $d_{i}(c)$ by the ratio $d_{i}(c)/\lambda^{*}u_{i}^{*}$ to normalize the nearness of the payment to the Lagrange payment. It also can be written as an LP. Using the above ratio, the optimal solution is $(c^{*},Z^{*})=(1.18,0.06)$, showing that all agents pay/receive within $6\%$ of their Lagrange optimal payment.
	
	If we change $s_{2}$ to $1.8$ (a higher marginal cost which implies smaller market power) while keeping the remaining parameters unchanged, the optimal solution becomes $(c^{*}, Z^{*})=(1.13,0.144)$, with distortions $(6.5\%,14.4\%,14.3\%,7.2\%)$. Therefore, unevenly distributed market power increases the maximum distortion between the SVCG payment and Lagrange payment, with a smaller agent facing higher distortion.
	
	We also note that if we further increase $s_{2}$ to $2.0$, the MPB condition is violated and hence we cannot find a constant c that satisfies both budget balance and individual rationality.
\end{exmp}
\subsection{Static Deterministic Systems with Quadratic Costs: Asymptotics}
We consider $N$ heterogeneous agents with quadratic costs and show that SVCG payments converge to the Lagrange payment as $N$ increases. Let $F_{i}(u_{i})=a_{i}u_{i}^2+b_{i}u_{i}$ be the concave quadratic utility function for agent $i$, whether supplier or consumer. Denote by $A=diag(a_{1}, a_{2},\ldots,a_{N})$ the diagonal matrix consisting of all the $a_{i}$, $B:=[b_{1};\ldots;b_{N}]$ and $U=[u_{1};\ldots;u_{N}]$. Suppose $A<0$. The ISO needs to solve:
\begin{equation}\label{m1}
\max \ [U^{T}AU+B^{T}U]\text{, subject to }1^{T}U=0.
\end{equation}
\noindent where $1$ is the all-one vector of proper size. The solution is:
\begin{equation}\label{lambda}
\lambda^{*N}=\gamma 1^{T}A^{-1}B,
\end{equation}
\begin{equation}\label{ustar}
U^{*N}=\frac{1}{2}A^{-1}\left(\lambda^{*N}\cdot 1-B\right).
\end{equation}

\noindent where $\gamma=\left(trace\left(A^{-1}\right)\right)^{-1}=(1^{T}A^{-1}1)^{-1}$ and index $N$ is used to keep track of the population size. Note also that the optimal social welfare is $\frac{1}{4} \lambda^{2} 1^{T} A^{-1} 1 = \frac{1}{4} \frac{(1^{T} A^{-1} B)^{2}}{1^{T} A^{-1} 1}$.

\begin{theorem}\label{theo2}
	For the SVCG mechanism with quadratic utility functions, if $(a_{i},b_{i})$ satisfy the following:
	\begin{enumerate}
		\item $\underline{a}\le a_{i}\le \bar{a}<0$, $0<\underline{b}\le b_{i}\le \bar{b}$,
		\item $(N-1)H_{max}(N)\le \sum_{i}H_{i}(N),\ F_{total}(N)>0,\ H_{i}(N)>0$, where the argument $N$ denotes that the corresponding quantity refers to the system with agents $2, ... N$, 
	\end{enumerate} 
	then the following holds:
	\begin{enumerate}
		\item There exists a $c^{N}$ satisfying:
		\begin{equation*}
		\frac{(N-1)F_{total}(N)}{\sum_{i}H_{i}(N)}\le c^{N}\le\frac{F_{total}(N)}{H_{max}(N)}.
		\end{equation*}
		Moreover, any such $c^{N}$ satisfies $\lim_{N\to\infty}c^{N}=1$,
		\item $\lim_{N\to\infty}(\lambda^{*N}u_{i}^{*N}-p_{i}^{N})=0$, for all $i$, where the superscript $N$ is used to denote that the corresponding quantity refers to the system with $N$ agents.
	\end{enumerate}
\end{theorem}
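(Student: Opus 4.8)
The plan is to reduce both parts to a handful of uniform estimates that come from the closed form \eqref{lambda}--\eqref{ustar} together with the bounds in hypothesis~(1), and then to take $c^{N}$ at the left endpoint $\underline c^{N}:=(N-1)F_{total}(N)/\sum_i H_i(N)$, which is admissible by hypothesis~(2) and Theorem~\ref{c}. First I would collect the estimates. Since $\lambda^{*N}=\big(\sum_i b_i/a_i\big)/\big(\sum_i 1/a_i\big)$ is a convex combination of $b_1,\dots,b_N$ with positive weights, $\underline b\le\lambda^{*N}\le\bar b$, and hence $|u_i^{*N}|\le(\bar b-\underline b)/(2|\bar a|)$; from stationarity $2a_iu_i^{*N}+b_i=\lambda^{*N}$ one gets the exact identity $\lambda^{*N}u_i^{*N}-F_i(u_i^{*N})=a_i(u_i^{*N})^{2}$, and summing over $i$ with $\sum_i u_i^{*N}=0$ yields $F_{total}(N)=-\sum_i a_i(u_i^{*N})^{2}$. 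Next, deleting agent $i$ shifts the multiplier only by $\lambda^{N}_{-i}-\lambda^{*N}=(\lambda^{*N}-b_i)/\big(a_i(P-1/a_i)\big)$ with $P:=1^{T}A^{-1}1$, which is $O(1/N)$ uniformly in $i$ since $|P-1/a_i|=\sum_{j\ne i}1/|a_j|\ge(N-1)/|\underline a|$; substituting this into the optimal-welfare formula of the reduced market gives the closed form $\delta_i(N):=F_{total}(N)-H_i(N)=-P(\lambda^{*N}-b_i)^{2}/\big(4a_i(P-1/a_i)\big)\ge0$, so $\delta_i(N)$ is bounded uniformly in $i,N$, and comparison with $-a_i(u_i^{*N})^2=(\lambda^{*N}-b_i)^2/(4|a_i|)$ gives
\[
\delta_i(N)=-a_i(u_i^{*N})^{2}+\varepsilon_i^{N},\qquad 0\le\varepsilon_i^{N}\le \frac{(\bar b-\underline b)^{2}|\underline a|}{4|\bar a|^{2}(N-1)} .
\]
Summing this over $i$ and using $F_{total}(N)=-\sum_i a_i(u_i^{*N})^{2}$ gives the central estimate $\sum_i\delta_i(N)=F_{total}(N)+R_N$ with $R_N:=\sum_i\varepsilon_i^{N}\in[0,C_0]$, $C_0=C_0(\underline a,\bar a,\underline b,\bar b)$, equivalently $\sum_i H_i(N)=(N-1)F_{total}(N)-R_N$. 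Finally, extending the optimal $N$-agent allocation by a zero coordinate shows $F_{total}(\cdot)$ is nondecreasing, so $F_{total}(N)>0$ for all $N$ forces $F_{total}(N)\ge c_0>0$ eventually, whence $(N-1)F_{total}(N)\to\infty$.

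\emph{Part 1.} With $c^{N}=\underline c^{N}$ the central estimate gives $\underline c^{N}-1=R_N/\big((N-1)F_{total}(N)-R_N\big)\in[0,C_0/((N-1)F_{total}(N)-C_0)]\to0$. For the right endpoint, $\bar c^{N}-1=\min_i\delta_i(N)/\big(F_{total}(N)-\min_i\delta_i(N)\big)$: if $F_{total}(N)\to\infty$ the numerator is bounded and this $\to0$; if $F_{total}(N)$ stays bounded then $\min_i\delta_i(N)\le N^{-1}\sum_i\delta_i(N)\le N^{-1}(F_{total}(N)+C_0)\to0$, and again $\bar c^{N}\to1$. Since every admissible $c^{N}$ lies in $[\underline c^{N},\bar c^{N}]$, squeezing gives $\lim_N c^{N}=1$.

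\emph{Part 2.} In equilibrium all agents bid truthfully, so $p_i^{N}=c^{N}H_i(N)-\big(F_{total}(N)-F_i(u_i^{*N})\big)$; writing $F_{total}(N)=H_i(N)+\delta_i(N)$ and inserting the two identities above,
\[
\lambda^{*N}u_i^{*N}-p_i^{N}=(1-c^{N})H_i(N)+\big[\lambda^{*N}u_i^{*N}-F_i(u_i^{*N})\big]+\delta_i(N)=(1-c^{N})H_i(N)+\varepsilon_i^{N}.
\]
With $c^{N}=\underline c^{N}$ and $H_i(N)\le F_{total}(N)$ we have $\big|(1-\underline c^{N})H_i(N)\big|=R_N\,H_i(N)/\sum_j H_j(N)\le 2C_0/(N-1)\to0$, while $0\le\varepsilon_i^{N}=O(1/N)\to0$; hence $\lambda^{*N}u_i^{*N}-p_i^{N}\to0$ for every $i$.

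The step I expect to be the main obstacle is the central estimate $\sum_i\delta_i(N)=F_{total}(N)+O(1)$: one must show the $N$ leave-one-out welfares add to essentially $(N-1)$ copies of the full welfare, although each of them grows linearly in $N$. Its proof rests on $F_{total}(N)=-\sum_i a_i(u_i^{*N})^{2}$ and $\delta_i(N)=-a_i(u_i^{*N})^{2}+O(1/N)$, and the latter hinges on the Lagrange multiplier being stable to within $O(1/N)$ when any single one of many agents is deleted --- which is exactly where the explicit LQ structure and the uniform bounds of hypothesis~(1) are used. Everything else is bookkeeping with these estimates.
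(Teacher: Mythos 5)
Your proof is correct, and it takes a genuinely different route from the paper's. The paper proves Lemma~\ref{lemma1} (the leave-one-out allocation differs from the truncated full-market allocation by $O(1/N)\mathbf{1}$), deduces $F_{total}/H_i\to 1$ by a perturbation of the welfare, and for Part~2 performs an explicit matrix computation whose upshot is that $\gamma^{2}/a_{1}+\gamma-\gamma_{-1}=O(1/N^{3})$, which kills the $\Theta(N^{2})$ factor $\xi^{2}$. You instead exploit the scalar structure to get exact closed forms: $\lambda^{*N}u_i^{*N}-F_i(u_i^{*N})=a_i(u_i^{*N})^{2}$, $F_{total}=-\sum_i a_i(u_i^{*N})^{2}$, and (I verified the algebra) $\delta_i=F_{total}-H_i=-P(\lambda^{*N}-b_i)^{2}/\bigl(4a_i(P-1/a_i)\bigr)$, so that everything reduces to the single uniform estimate $\varepsilon_i^{N}=\delta_i+a_i(u_i^{*N})^{2}=O(1/N)$ and the identity $\sum_iH_i=(N-1)F_{total}-R_N$ with $R_N=O(1)$. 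This buys two things the paper does not have: (i) it avoids the paper's unjustified step ``$w_i^{*}=\Theta(1)$'', which fails when the $b_i$ are nearly equal and $F_{total}$ stays bounded --- your monotonicity-of-$F_{total}$ argument (valid under the natural nested-population reading) handles that case cleanly; (ii) it gives explicit nonasymptotic rates. One caveat worth flagging: Part~2 as you prove it is for the specific admissible choice $c^{N}=\underline{c}^{N}$, whereas the paper's computation of $p_1^{N}$ is for the unscaled payment $c=1$ (which, since $\underline{c}^{N}\ge 1$, generally lies outside the admissible interval). The theorem statement is ambiguous here, and indeed for an arbitrary admissible $c^{N}$ (e.g.\ $c^{N}=\bar{c}^{N}$ when $\min_i\delta_i\not\to 0$) the term $(1-c^{N})H_i$ need not vanish, so some such specification is genuinely needed; your reading is the one consistent with the SVCG mechanism actually satisfying BB and IR.
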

\begin{proof}
	It suffices to prove the result for the first agent. Let $A_{-1}=diag(a_{2},\ldots,a_{N})$, $B_{-1}=[b_{2};\ldots;b_{N}]$, $1_{-1}$ be the all-one vector of dimension $N-1$ and $\gamma_{-1}=\left(trace\left(A_{-1}^{-1}\right)\right)^{-1}$. 
		\begin{lemma}\label{lemma1}
		Let $U^{*}$ and $W^{*}$ be the optimal solutions to the problem consisting of all agents, and the problem excluding the first agent, respectively. Then, as the number of agents increases,
		\begin{equation}
		\lim_{N\to\infty} \begin{bmatrix}
		0_{(N-1)\times 1}& I_{N-1}\\
		\end{bmatrix}U^{*}-W^{*}=O(1/N)1_{-1},
		\end{equation} where $0_{(N-1)\times 1}$ is the $N-1$ dimensional column vector of zeroes, and $I_{N-1}$ is the $N-1$ dimensional identity matrix.
		
	\end{lemma}
\begin{proof}
	From \eqref{lambda} and \eqref{ustar}, 
		\begin{align*}
		&\begin{bmatrix}
		0 & I\\
		\end{bmatrix}U^{*}-W^{*}=\\
		&=\frac{1}{2}\begin{bmatrix}
		0 & I\\
		\end{bmatrix}\begin{bmatrix}
		a_{1}^{-1} & 0 \\ 0 & A_{-1}^{-1}
		\end{bmatrix}\left(\gamma\cdot\begin{bmatrix}
		1 & 1_{-1}^{T}\\
		\end{bmatrix}\begin{bmatrix}
		a_{1}^{-1} & 0 \\ 0 & A_{-1}^{-1}
		\end{bmatrix}\begin{bmatrix}
		b_{1}\\ B_{-1}
		\end{bmatrix}\right.\\
		&\left.\begin{bmatrix}
		1\\ 1_{-1}
		\end{bmatrix}-\begin{bmatrix}
		b_{1}\\ B_{-1}
		\end{bmatrix}\right)-\frac{1}{2}A_{-1}^{-1}\left(\gamma_{-1}1^{T}_{-1}A_{-1}^{-1}B_{-1}1_{-1}-B_{-1}\right)\\
		&=\frac{1}{2}\left(\gamma a_{1}^{-1}b_{1}+\left(\gamma-\gamma_{-1}\right)1_{-1}^{T}A_{-1}^{-1}B_{-1}\right)A_{-1}^{-1}1_{-1}.
		\end{align*}
		Since $\underline{a}\le a_{i}\le \bar{a}<0$, $\gamma = \Theta (1/N)$, and $\gamma<0$, $\frac{\bar{a}\underline{b}}{\underline{a}N}\le \gamma a_{1}^{-1}b_{1}\le \frac{\underline{a}\bar{b}}{\bar{a}N}$, $\frac{\bar{a}^{2}}{-\underline{a}N(N-1)}\le \gamma-\gamma_{-1}\le \frac{\underline{a}^{2}}{-\bar{a}N(N-1)}$, $\frac{\underline{a}^{2}\bar{b}}{-\bar{a}^{2}N}\le(\gamma-\gamma_{-1})1^{T}_{-1}A^{-1}_{-1}B_{-1}\le\frac{\bar{a}^{2}\underline{b}}{-\underline{a}^{2}N}$. So, $
		\lim_{N\to\infty}\begin{bmatrix}
		0 & I\\
		\end{bmatrix}U^{*}-W^{*}=0.\qedhere$		 
\end{proof}
Let $\begin{bmatrix}
		0 & I\\
		\end{bmatrix}U^{*}=V^{*}$. From Lemma \ref{lemma1}, $v_{i}^{*}-w_{i}^{*}=O(\frac{1}{N})$
where $v_{i}$ and $w_{i}$ is the $i$-th component of $V^{*}$ and $W^{*}$, respectively. Hence,
\begin{align*}
&\frac{F_{total}}{H_{1}}=\frac{a_{1}u_{1}^{*2}+b_{1}u_{1}^{*}+\sum_{i=2}^{N}(a_{i}v_{i}^{*2}+b_{i}v_{i}^{*})}{\sum_{i=2}^{N}(a_{i}w_{i}^{*2}+b_{i}w_{i}^{*})}\\
&=\frac{a_{1}u_{1}^{*2}+b_{1}u_{1}^{*}+\sum_{i=2}^{N}\left(a_{i}w_{i}^{*2}+b_{i}w_{i}^{*}+G_{1}\right)}{\sum_{i=2}^{N}(a_{i}w_{i}^{*2}+b_{i}w_{i}^{*})},
\end{align*}
\noindent where $G_{1}=(2a_{i}w_{i}^{*}+b_{i})O(\frac{1}{N})+a_{i}O(\frac{1}{N^{2}})$. From equations \eqref{lambda} and \eqref{ustar}, $w_{i}^{*}=\Theta(1)$. Therefore,$\displaystyle\lim_{N\to\infty}\left(F_{total}/H_{1}\right)=1$. Similarly, for all other $i$, $\displaystyle\lim_{N\to\infty}\left(F_{total}/H_{i}\right)=1$. 
Therefore, $\displaystyle\lim_{N\to\infty}\bar{c}^{N}=1$.

Let $H_{min}=\min_{i}H_{i}$. Since $\frac{(N-1)F_{total}}{NH_{max}}\le\underline{c}^{N}\le\frac{(N-1)F_{total}}{NH_{min}}$, $\lim_{N\to\infty}\underline{c}^{N}=1$. Consequently,
$\displaystyle\lim_{N\to\infty}c^{N}=1.$

From Lemma \ref{lemma1}, $W^{*}-V^{*}=-\frac{1}{2}\left(\gamma a_{1}^{-1}b_{1}+\left(\gamma-\gamma_{-1}\right)\xi\right)A^{-1}_{-1}1_{-1}$, where $\xi=1^{T}_{-1}A^{-1}_{-1}B_{-1}$ and $\xi=\Theta(N)$. The payment by Agent $1$ is:
\begin{align*}
p_{1}^{N}&=U_{-1}^{*T}A_{-1}U_{-1}^{*}+B_{-1}^{T}U_{-1}^{*}-V^{*T}A_{-1}V^{*}-B_{-1}^{T}V^{*}\\
&=(U_{-1}^{*}+V^{*})^{T}A_{-1}(U_{-1}^{*}-V^{*})+B^{T}_{-1}(U_{-1}^{*}-V^{*}).
\end{align*}
The difference between Lagrange and VCG payments is:
\begin{align*}
&\lambda^{*N}u_{1}^{*N}-p_{1}^{N}\\
&=\frac{1}{2a_{1}}\gamma(a_{1}^{-1}b_{1}+\xi)\left(\gamma\left(a_{1}^{-1}b_{1}+\xi\right)-b_{1}\right)-p_{1}^{N}\\
&=\frac{1}{2a_{1}}\gamma^{2}(a_{1}^{-1}b_{1}+\xi)^{2}-\frac{b_{1}}{2a_{1}}\gamma(a_{1}^{-1}b_{1}+\xi)\\
&-\bigg[\frac{1}{2}\left[\left(\gamma a_{1}^{-1}b_{1}+\left(\gamma+\gamma_{-1}\right)\xi\right)1_{-1}^{T}-2B_{-1}^{T}\right]A_{-1}^{-1}A_{-1}\\
&+B_{-1}^{T}\bigg]\cdot\frac{-1}{2}\left(\gamma a_{1}^{-1}b_{-1}+\left(\gamma-\gamma_{-1}\right)\xi\right)A_{-1}^{-1}1_{-1}\\
&=\frac{1}{2a_{1}}\gamma^{2}(a_{1}^{-1}b_{1}+\xi)^{2}-\frac{b_{1}}{2a_{1}}\gamma(a_{1}^{-1}b_{1}+\xi)\\
&+\frac{1}{4\gamma_{-1}}\left[\gamma^{2}a_{1}^{-2}b_{1}^{2}+2a_{1}^{-1}b_{1}\gamma^{2}\xi+(\gamma^{2}-\gamma_{-1}^{-2})\xi^{2}\right].
\end{align*}
Since $\gamma=\Theta(\frac{1}{N})$,
\begin{align*}
&\lim_{N\to\infty}\left(\lambda^{*N}u_{1}^{*N}-p_{1}^{N}\right)\\
=&\lim_{N\to\infty}\left[\frac{\gamma^{2}\xi^{2}}{2a_{1}}-\frac{b_{1}\gamma\xi}{2a_{1}}+\frac{b_{1}\gamma^{2}\xi}{2a_{1}\gamma_{-1}}+\frac{(\gamma^{2}-\gamma_{-1}^{2})\xi^{2}}{4\gamma_{-1}}\right]\\
=&\lim_{N\to\infty}\left[\frac{\xi^{2}}{4}(\frac{2\gamma^{2}}{a_{1}}+\frac{\gamma^{2}-\gamma_{-1}^{2}}{\gamma_{-1}})-\frac{b_{1}\gamma\xi}{2a_{1}}(1-\frac{\gamma}{\gamma_{-1}})\right]\\
=&\lim_{N\to\infty}\left[\frac{\xi^{2}}{4}(\frac{\gamma^{2}}{a_{1}}+\gamma-\gamma_{-1})\right].
\end{align*}
By calculation, we have
\begin{align*}
\frac{\gamma^{2}}{a_{1}}+\gamma-\gamma_{-1}=\frac{-1}{a_{1}^2}\left[\frac{1}{(\sum_{i=1}^{N}\frac{1}{a_{i}})^{2}(\sum_{j=2}^{N}\frac{1}{a_{j}})}\right]=O(\frac{1}{N^{3}}).
\end{align*}
Therefore, $\displaystyle\lim_{N\to\infty}\left(\lambda^{*N}u_{1}^{*N}-p_{1}^{N}\right)=0.\qedhere$
\end{proof}
\subsection{Dynamic Deterministic Systems}
The VCG scheme can be extended to the important case of deterministic dynamic systems. One can simply consider the entire sequence of actions taken by an agent as a vector action, i.e., as an open-loop control, where the entire decision on the sequence of controls to be employed is taken at the initial time, and so treatable as a static problem.

 For agent $i$, let $F_{i}(x_{i}(t),u_{i}(t))$ be its one-step utility function at time $t$. Suppose that its state evolves as:
\begin{equation}\label{A0}
x_{i}(t+1)=g_{i}(x_{i}(t),u_{i}(t)).
\end{equation}
An example of a wind turbine model can be found in \cite{6713053} where $x(t)=[x_{1}(t),x_{2}(t),x_{3}(t)]^{T}$, $x_{1}(t)$ denotes rotor speed, $x_{2}(t)$ denotes drive train torsion, and $x_{3}(t)$ denotes generator speed. Control $u(t)$ is the collective blade pitch angle and the state equation can be written as $\dot{x}(t)=Ax(t)+Bu(t)$. We assume that $F_{i}=-\infty$ when $(x_{i}(t),u_{i}(t))$ do not satisfy the state dynamic constraint $g_{i}$ or constraints on $(x_{i}(t),u_{i}(t))$ are violated. The ISO asks each agent $i$ to bid its one-step utility functions, state equations and initial condition. Denote the one-step utility function bids made by agent $i$ by $\{\hat{F}_{i}(x_{i}(t),u_{i}(t)), t = 0, 1, \ldots , T-1\}$, its state equation bids by $\{\hat{g}_{i}, t = 0, 1, \ldots , T-1\}$, and its initial condition bid by $\hat{x}_{i,0}$. The ISO then calculates $(x^{*}_{i}(t),u^{*}_{i}(t))$ as the optimal solution, assumed to be unique, to:
\begin{equation*}
\max \ \sum_{i=1}^{N}\sum_{t=0}^{T-1}\hat{F}_{i}(x_{i}(t),u_{i}(t))
\end{equation*}
\noindent subject to
\begin{equation*}
x_{i}(t+1)=\hat{g}_{i}(x_{i}(t),u_{i}(t)),\forall i \text{ and } \forall t,
\end{equation*}
\begin{equation}\label{A1}
\sum_{i=1}^{N}u_{i}(t)=0,\forall t,
\end{equation}
\begin{equation}\label{A2}
x_{i}(0)=\hat{x}_{i,0},\forall i.
\end{equation}
Denote this problem as $(\hat{F},\hat{g},\hat{x}_{0})$. We can extend the VCG payment $p_{i}$ to the deterministic dynamic system. Let
\begin{equation*}
p_{i}:=\sum_{j\ne i}\sum_{t=0}^{T-1}\hat{F}_{j}(x^{(i)}_{j}(t),u^{(i)}_{j}(t))-\sum_{j\ne i}\sum_{t=0}^{T-1}\hat{F}_{j}(x^{*}_{j}(t),u^{*}_{j}(t)).
\end{equation*}
Here $(x^{(i)}_{j}(t),u^{(i)}_{j}(t))$ is the optimal solution to the following problem, which is assumed to be unique:
\begin{equation*}
\max \ \sum_{j\ne i}\sum_{t=0}^{T-1}\hat{F}_{j}(x_{j}(t),u_{j}(t))
\end{equation*}
\noindent subject to
\begin{equation*}
x_{j}(t+1)=\hat{g}_{j}(x_{j}(t),u_{j}(t)),\text{ for }j\ne i \text{ and } \forall t,
\end{equation*}
\begin{equation*}
\sum_{j\ne i}u_{j}(t)=0,\forall t,
\end{equation*}
\begin{equation*}
x_{j}(0)=\hat{x}_{j,0},\text{ for }j\ne i.
\end{equation*}
More generally, we can consider a Groves payment $p_{i}$:
\begin{equation*}
p_{i}:=h_{i}(\boldsymbol{\hat{F}}_{-i})-\sum_{j\ne i}\sum_{t=0}^{T-1}\hat{F}_{j}(x^{*}_{j}(t),u^{*}_{j}(t)),
\end{equation*}
where $h_{i}$ is any arbitrary function. 
\begin{theorem}\label{dtruth}
Truth-telling of utility function, state dynamics and initial condition ($\hat{F}_{i}= F_{i}$, $\hat{g}_{i}= g_{i}$ and $\hat{x}_{i,0}=x_{i,0}$) is a dominant strategy equilibrium under the Groves mechanism for a dynamic system.
\end{theorem}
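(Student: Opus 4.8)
The plan is to reduce the dynamic problem to a static one and then invoke the Groves argument. Because the dynamics \eqref{A0} are deterministic, an agent can commit to its entire control sequence at time $0$: there is no informational gap between open-loop and feedback play, so the problem is isomorphic to a static allocation problem in which agent $i$'s ``commodity'' is its whole trajectory $\tau_{i}:=(x_{i}(\cdot),u_{i}(\cdot))$, its valuation is the intertemporal sum $F_{i}(\tau_{i}):=\sum_{t=0}^{T-1}F_{i}(x_{i}(t),u_{i}(t))$ (with the convention $F_{i}(\tau_{i})=-\infty$ whenever $\tau_{i}$ is inconsistent with $g_{i},x_{i,0}$ or a hard constraint), and the coupling constraint is the family of balance constraints \eqref{A1}. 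The one genuinely new feature relative to the static VCG analysis is that an agent's ``bid'' now specifies not just a valuation $\hat{F}_{i}$ but also, through $(\hat{g}_{i},\hat{x}_{i,0})$, which trajectories the ISO will regard as feasible for agent $i$.

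First I would fix an arbitrary report profile $\boldsymbol{\hat{F}}_{-i},\boldsymbol{\hat{g}}_{-i},\boldsymbol{\hat{x}}_{0,-i}$ of the other agents and write out agent $i$'s realized net utility when it submits an arbitrary report $(\hat{F}_{i},\hat{g}_{i},\hat{x}_{i,0})$. Letting $\tau^{*}$ be the (assumed unique) ISO allocation for the reported problem $(\hat{F},\hat{g},\hat{x}_{0})$, the definition of the Groves payment gives
\[
F_{i}(\tau^{*}_{i})-p_{i}=F_{i}(\tau^{*}_{i})+\sum_{j\ne i}\hat{F}_{j}(\tau^{*}_{j})-h_{i}(\boldsymbol{\hat{F}}_{-i}).
\]
Since $h_{i}(\boldsymbol{\hat{F}}_{-i})$ does not involve any component of agent $i$'s report, it is an additive constant from agent $i$'s viewpoint, and it suffices to show that $Q_{i}:=F_{i}(\tau^{*}_{i})+\sum_{j\ne i}\hat{F}_{j}(\tau^{*}_{j})$ is maximized by the truthful report.

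The crux is a single bounding step. If $\tau^{*}_{i}$ is inconsistent with the true $g_{i},x_{i,0}$, then $F_{i}(\tau^{*}_{i})=-\infty$ and $Q_{i}=-\infty$, so such a report is never optimal. Otherwise $\tau^{*}$ is feasible for the problem obtained by replacing agent $i$'s reported model by its true one while keeping the other agents' reports and imposing \eqref{A1}; hence
\[
Q_{i}\ \le\ \max_{\tau}\Big\{F_{i}(\tau_{i})+\sum_{j\ne i}\hat{F}_{j}(\tau_{j})\Big\},
\]
the maximum being over all $\tau$ feasible for that ``true model for $i$, reported models for $-i$'' problem subject to \eqref{A1}. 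But that maximand is precisely the objective the ISO optimizes when agent $i$ reports truthfully, and its maximizer is exactly the allocation $\tau^{\mathrm{tr}}$ the ISO then selects; so $Q_{i}\le F_{i}(\tau^{\mathrm{tr}}_{i})+\sum_{j\ne i}\hat{F}_{j}(\tau^{\mathrm{tr}}_{j})$, with equality achieved by the truthful report. As the other agents' reports were arbitrary, truth-telling of $(F_{i},g_{i},x_{i,0})$ is a dominant strategy for every $i$, hence a dominant-strategy equilibrium.

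I expect the main obstacle to be the careful handling of untruthful \emph{dynamics} and \emph{initial-condition} reports, which have no analogue in the static argument. The point to nail down is that misrepresenting $g_{i}$ or $x_{i,0}$ can only enlarge the collection of trajectories the ISO believes feasible for agent $i$, and that any trajectory the ISO might assign through such an enlargement but which agent $i$ cannot actually realize contributes $-\infty$ to agent $i$'s true utility by the $F_{i}\equiv-\infty$ convention; consequently the effective set of allocations over which agent $i$ can steer $Q_{i}$ is never strictly larger than the set available under truthful reporting. A secondary, purely technical point is that the uniqueness assumption in the statement is needed only to make ``the allocation $\tau^{\mathrm{tr}}$'' well defined; the inequality chain itself does not rely on it.
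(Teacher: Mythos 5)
Your proposal is correct and follows essentially the same route as the paper's proof: treat the deterministic dynamic problem as a static allocation over whole trajectories, cancel the $h_{i}$ term, and split into the case where the assigned trajectory satisfies the true dynamics (so the ISO's optimality for the ``truthful for $i$'' problem gives the inequality) and the case where it does not (so the $F_{i}\equiv-\infty$ convention makes the deviation strictly worse). The only cosmetic difference is that you phrase the comparison as an upper bound on $Q_{i}$ while the paper computes the difference of net utilities directly; the substance is identical.
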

\begin{proof}
	Let $\hat{F}:=(\hat{F}_{1},...,\hat{F}_{i},...,\hat{F}_{N})$, $\hat{g}:=(\hat{g}_{1}...,\hat{g}_{i},...,\hat{g}_{N})$, and $\hat{x}_{0}:=(\hat{x}_{1,0},...,\hat{x}_{i,0},...,\hat{x}_{N,0})$.
	Suppose agent $i$ announces the true one-step utility function $F_{i}$, true state dynamics $g_{i}$, and true initial condition $x_{i,0}$. Let
	 $\bar{F}:=(\hat{F}_{1},...\hat{F}_{i-1},F_{i},\hat{F}_{i+1},...,\hat{F}_{N})$, $\bar{g}:=(\hat{g}_{1},...\hat{g}_{i-1},g_{i},\hat{g}_{i+1},...,\hat{g}_{N})$, and $\bar{x}_{0}:=(\hat{x}_{1,0},...\hat{x}_{i-1,0},x_{i,0},\hat{x}_{i+1,0},...,\hat{x}_{N,0})$. Let $(\bar{x}^{*}_{i}(t),\bar{u}^{*}_{i}(t))$ be what ISO assigns and $p_{i}(\bar{F},\bar{g},\bar{x}_{0})$ be what ISO charges when $(\bar{F},\bar{g},\bar{x_{0}})$ is announced by agents. Let $(x^{*}_{i}(t),u^{*}_{i}(t))$ be what ISO assigns, and $p_{i}(\hat{F},\hat{g},\hat{x}_{0})$ be what ISO charges when $(\hat{F},\hat{g},\hat{x}_{0})$ is announced by agents. Define $\bar{F}(x_{i}(t),u_{i}(t)):=\sum_{i}\bar{F}_{i}(x_{i}(t),u_{i}(t))$.
	 
	 For agent $i$, the difference between net utility resulting from announcing $(F_{i},g_{i},x_{i,0})$ and $(\hat{F}_{i},\hat{g}_{i},\hat{x}_{i,0})$ is
	 \begin{align*}
	 \Big[\sum_{t}F_{i}(\bar{x}^{*}_{i}(t),\bar{u}^{*}_{i}(t))-p_{i}(\bar{F},\bar{g},\bar{x}_{0})\Big]-\Big[\sum_{t}&F_{i}(x^{*}_{i}(t),u^{*}_{i}(t))\\
	 &-p_{i}(\hat{F},\hat{g},\hat{x}_{0})\Big]
	 \end{align*}
	 \begin{align}\label{dge}
	 &=\sum_{t}F_{i}(\bar{x}^{*}_{i}(t),\bar{u}^{*}_{i}(t))-h_{i,t}(\bar{F}_{-i})+\sum_{j\ne i}\sum_{t}\hat{F}_{j}(\bar{x}^{*}_{i}(t),\bar{u}^{*}_{i}(t)) \nonumber
	 \\
	 &-\sum_{t}F_{i}(x^{*}_{i}(t),u^{*}_{i}(t))+h_{i,t}(\hat{F}_{-i})-\sum_{j\ne i}\sum_{t}\hat{F}_{j}(x^{*}_{i}(t),u^{*}_{i}(t))\nonumber\\
	 &=\sum_{t}\bar{F}(\bar{x}^{*}_{i}(t),\bar{u}^{*}_{i}(t))-\sum_{t}\bar{F}(x^{*}_{i}(t),u^{*}_{i}(t))
	 \end{align}
	 There are two cases: 1. When $(x^{*}_{i}(t),u^{*}_{i}(t))$ also satisfy the state dynamic constraint $g_{i}$, the RHS of \eqref{dge} $\ge 0$ since $(\bar{x}^{*}_{i}(t),\bar{u}^{*}_{i}(t))$ is the optimal solution to the problem $(\bar{F},\bar{g},\bar{x}_{0})$; 2. When $(x^{*}_{i}(t),u^{*}_{i}(t))$ does not satisfy the state dynamics, $F_{i}=-\infty$ and hence the RHS of \eqref{dge} $\ge 0$. 
\end{proof}
Theorem \ref{dtruth} extends to the case of state and input constraints since $F_{i}=-\infty$ outside its feasible set.

Consider now the Scaled VCG mechanism:
\begin{equation*}
p_{i}:=c\sum_{j\ne i}\sum_{t=0}^{T-1}\hat{F}_{j}(x^{(i)}_{j}(t),u^{(i)}_{j}(t))-\sum_{j\ne i}\sum_{t=0}^{T-1}\hat{F}_{j}(x^{*}_{j}(t),u^{*}_{j}(t)).
\end{equation*}
As in the static case, there exists a range of values $c$ that simultaneously achieves IC, EF, BB and IR. 
\subsection{Deterministic Linear Systems with Quadratic Costs}
As in the static case, the SVCG mechanism is asymptotically Lagrange optimal for linear systems with quadratic costs as the number of agents goes to infinity, as shown below. Consider
$F_{i}(x_{i}(t),u_{i}(t))=q_{i}x_{i}^{2}(t)+r_{i}u_{i}^{2}(t)$ and $x_{i}(t+1)=a_{i}x_{i}(t)+b_{i}u_{i}(t)$. Suppose $q_{i} \leq 0$, and $r_{i} < 0$. The following Theorem can be viewed as a generalization of Theorem \ref{theo2} that allows for state dynamics and a horizon larger than 1.
\begin{theorem}\label{theo3}
	For the SVCG mechanism with quadratic utility functions and linear state dynamics, if $(a_{i},b_{i},p_{i},q_{i})$ satisfy
	\begin{enumerate}
		\item $\underline{a}\le |a_{i}|\le \bar{a}$, $\underline{b}\le |b_{i}|\le \bar{b}$, $\underline{q}\le q_{i}\le \bar{q}<0$ and $\underline{r}\le r_{i}\le \bar{r}<0$, 
		\item $(N-1)H_{max}(N)\le\sum_{i}H_{i}(N)$, $F_{total}(N)> 0$ and $H_{i}(N)>0$ for all $i$.
	\end{enumerate}
	Then the following hold:
	\begin{enumerate}
		\item There exist $\underline{c}^{N} \le \bar{c}^{N}$ such that for any $c^{N} \in [ \underline{c}^{N}, \bar{c}^{N}]$,  BB and IR hold. Moreover, $\lim_{N\to\infty}c^{N}=1$,
		\item $\lim_{N\to\infty}\left(\sum_{t}\left(\lambda^{*N}(t)u_{i}^{N}(t)\right)-p_{i}^{N}\right)=0$, for all $i$.
	\end{enumerate}
\end{theorem}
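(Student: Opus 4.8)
The plan is to reduce the finite-horizon linear-quadratic problem to the static quadratic form already handled in Theorem~\ref{theo2}, and then replay that argument block by block. Since each agent's initial condition is fixed and $x_i(t+1)=a_ix_i(t)+b_iu_i(t)$, we have $x_i(t)=a_i^tx_{i,0}+\sum_{s=0}^{t-1}a_i^{t-1-s}b_iu_i(s)$, so the stacked control $\mathbf{u}_i:=(u_i(0),\dots,u_i(T-1))^\top$ determines the whole trajectory affinely. Hence $\sum_{t=0}^{T-1}\big(q_ix_i^2(t)+r_iu_i^2(t)\big)=\mathbf{u}_i^\top M_i\mathbf{u}_i+2N_i^\top\mathbf{u}_i+c_i$, where $M_i\prec0$ because $r_iI\prec0$ and the state contribution is negative semidefinite ($q_i\le0$), and $N_i,c_i$ depend on $x_{i,0}$. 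Under the uniform parameter bounds in hypothesis~1 (together with a uniform bound $|x_{i,0}|\le\bar x_0$, which I would add as a standing assumption), the eigenvalues of each $M_i$ and the norms of $N_i$ are bounded above and away from zero uniformly in $i$ and $N$, and $T$ is fixed, so every $O(\cdot)$ constant below absorbs $T$. The $T$ power-balance constraints \eqref{A1} become $E\mathbf{u}=0$ for a fixed $T\times NT$ matrix $E$ with orthogonal rows, and the social-welfare problem is a block-diagonal concave quadratic program with $T$ equality constraints. Its solution is available in closed form: a time-varying multiplier vector $\boldsymbol{\lambda}^{*N}=(\lambda^{*N}(0),\dots,\lambda^{*N}(T-1))^\top$ and a stacked optimum $\mathbf{u}^{*N}$, with the scaling $\boldsymbol{\lambda}^{*N}=\Theta(1/N)$ componentwise and $\mathbf{u}_i^{*N}=\Theta(1)$, exactly as in \eqref{lambda}--\eqref{ustar}.

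Next I would prove the dynamic analogue of Lemma~\ref{lemma1}. Let $\mathbf{w}^{(i),N}$ be the stacked optimum of the reduced problem that omits agent $i$, with multiplier vector $\boldsymbol{\lambda}_{-i}^{*N}$. Subtracting the closed-form expressions for the full and reduced problems, the difference $\mathbf{u}_j^{*N}-\mathbf{w}_j^{(i),N}$ for $j\ne i$ is a fixed affine function of $\boldsymbol{\lambda}^{*N}-\boldsymbol{\lambda}_{-i}^{*N}$ and of the single term that agent $i$ contributes to the coupling constraint, both $O(1/N)$ by the scaling above; hence $\max_{j\ne i}\|\mathbf{u}_j^{*N}-\mathbf{w}_j^{(i),N}\|_\infty=O(1/N)$ uniformly in $i$. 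This is the computation in the proof of Lemma~\ref{lemma1}, now with block-diagonal $T\times T$ blocks in place of scalars.

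With these two facts, parts (1) and (2) follow as in Theorem~\ref{theo2}. For (1): $F_{total}(N)=J_1(\mathbf{u}_1^{*N})+\sum_{j\ge2}J_j(\mathbf{u}_j^{*N})$ and $H_i(N)=\sum_{j\ne i}J_j(\mathbf{w}_j^{(i),N})$, where $J_j(\mathbf{u}_j):=\mathbf{u}_j^\top M_j\mathbf{u}_j+2N_j^\top\mathbf{u}_j+c_j$. Since $\mathbf{u}_j^{*N}-\mathbf{w}_j^{(i),N}=O(1/N)$ with $\mathbf{w}_j^{(i),N}=\Theta(1)$, each matched summand differs by $O(1/N)$, while the lone unmatched term $J_i(\mathbf{u}_i^{*N})=\Theta(1)$ divided by $H_i(N)=\Theta(N)$ is $O(1/N)$; hence $F_{total}(N)/H_i(N)\to1$ for every $i$, so $\bar c^N=F_{total}(N)/H_{max}(N)\to1$, and squeezing $\underline c^N$ between $\tfrac{N-1}{N}F_{total}/H_{max}$ and $\tfrac{N-1}{N}F_{total}/H_{min}$ gives $\underline c^N\to1$; the interval $[\underline c^N,\bar c^N]$ is nonempty by hypothesis~2 (via Theorem~\ref{c}), so any $c^N$ in it satisfies $c^N\to1$. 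For (2): $\sum_t\lambda^{*N}(t)u_i^{*N}(t)=\boldsymbol{\lambda}^{*N\top}\mathbf{u}_i^{*N}$ and $p_i^N=J_{-i}(\mathbf{u}_{-i}^{*N})-J_{-i}(\mathbf{w}^{(i),N})$ with $J_{-i}:=\sum_{j\ne i}J_j$. Expanding $p_i^N$ to first order about $\mathbf{w}^{(i),N}$ and using the stationarity condition $\nabla J_{-i}(\mathbf{w}^{(i),N})=E_{-i}^\top\boldsymbol{\lambda}_{-i}^{*N}$ for the reduced problem, the leading terms cancel against $\boldsymbol{\lambda}^{*N\top}\mathbf{u}_i^{*N}$ exactly as in the static derivation, leaving a remainder that pairs $O(1/N)$ factors against $\Theta(N)$ quantities; the surviving scalar is the block analogue of $\gamma^2/a_1+\gamma-\gamma_{-1}=O(1/N^3)$ produced by the matrix-inversion lemma, which is $o(1)$. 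Hence $\sum_t\lambda^{*N}(t)u_i^{*N}(t)-p_i^N\to0$ for every $i$.

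The main obstacle is the linear algebra in the lifted problem: the cost matrix is now block diagonal with $T\times T$ blocks and there are $T$ coupled multipliers rather than a single scalar $\lambda^{*N}$, so one must verify by Schur-complement and matrix-inversion-lemma manipulations that the multiplier vector is still $\Theta(1/N)$, that deleting one agent perturbs the rest by $O(1/N)$, and --- the most delicate point --- that the residual term in the Lagrange-versus-SVCG comparison, the analogue of $\gamma^2/a_1+\gamma-\gamma_{-1}$, still vanishes after multiplication by the $\Theta(N)$ prefactors, i.e.\ that the cancellations that produced the extra powers of $1/N$ in Theorem~\ref{theo2} persist in matrix form. The bookkeeping is heavier but structurally identical to Theorem~\ref{theo2}; the only genuinely new ingredient is the reduction to static form by stacking the controls over the fixed horizon $T$.
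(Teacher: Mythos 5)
Your proposal is correct and follows essentially the same route as the paper: both reduce the finite-horizon problem to the static quadratic form of Theorem~\ref{theo2} by stacking the controls over the horizon (the paper's augmented problem \eqref{d1} with block-diagonal $W$ and constraint matrix $Y$), establish that the multiplier is $\Theta(1/N)$ and that deleting one agent perturbs the others' optima by $O(1/N)$ (the block analogue of Lemma~\ref{lemma1}), and then verify that the residual in the Lagrange-versus-SVCG comparison is the matrix analogue of $\gamma^{2}/a_{1}+\gamma-\gamma_{-1}=O(1/N^{3})$. The only cosmetic difference is that you carry a bounded nonzero initial condition where the paper sets $x_{j}(0)=0$ without loss of generality.
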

\begin{proof}
	Let $X(t)=(x_{1}(t),x_{2}(t),...,x_{N}(t))^{T}$, $U(t)=(u_{1}(t),u_{2}(t),...,u_{N}(t))^{T}$, $A=diag(a_{1},a_{2},...,a_{N})$, $B=diag(b_{1},b_{2},...,b_{N})$,
	$Q=diag(q_{1},q_{2},...,q_{N})$, and
	$R=diag(r_{1},r_{2},...,r_{N})$. Without loss of generality, assume $x_{j}(0)=0$ for all $j$. The utility maximization problem can be rewritten as the following Linear-Quadratic (LQ) problem:
	\begin{equation}\label{obj}
	\max \ \sum_{t=0}^{T-1}X^{T}(t)QX(t)+U^{T}(t)RU(t)
	\end{equation}
	\noindent subject to
	\begin{equation}\label{state}
	X(t+1)=AX(t)+BU(t),
	\end{equation}
	\begin{equation*}
	1^{T}U(t)=0, \forall t.
	\end{equation*}
	By substituting \eqref{state} into \eqref{obj}, and using the fact that open-loop optimal control is equivalent to the closed-loop optimal solution to LQ problem, we have the following equivalent augmented LQ problem: 
	\begin{equation}\label{d1}
	\max\ [\Omega^{T}(t)W\Omega(t)+V^{T}\Omega(t)]\text{, subject to }Y^{T}\Omega(t)=0.
	\end{equation}
	\noindent where $\Omega:=(U_{1};U_{2};...;U_{N})$, and $U_{i}=(u_{i}(0);u_{i}(1);...;u_{i}(T-1))$, $W$ and $V$ are formed by multiplication and addition of $A, B, Q, R$ and $Y:=[I_{T};I_{T};...;I_{T}]$ with $N$ $T$-dimensional identity matrix $I_{T}$. More specifically, $W$ can be partitioned into diagonal blocks: $W=diag(W_{1},...,W_{N})$, where each block $W_{i}$ is a $T\times T$ square matrix consisting of multiplication and addition of $a_{i}$, $b_{i}$, $q_{i}$, $r_{i}$.
	
	Noting that the optimization problem \eqref{d1} is in the same form as \eqref{m1}, the unique Lagrange multiplier $\bm{\lambda}$ is calculated as $\lambda^{*}=\Gamma Y^{T}W^{-1}V$, where $\Gamma=(Y^{T}W^{-1}Y)^{-1}$. The key to the proof of Theorem \ref{theo2} is to show that $\gamma$ is $\Theta(1/N)$. (Note that $f(N) = \Omega (g(N))$ if $f(N) = \mathcal{O}(g(N))$ as well as $g(N) = \Omega(f(N))$). Similarly, by expanding $\Gamma=(W_{1}^{-1}+W_{2}^{-1}+...+W_{N}^{-1})^{-1}$ and applying bounded inverse theorem \cite{MichaelRenardy2004}, $\vert\vert\Gamma\vert\vert$ is also $\Theta(1/N)$ since $a_{i}$, $b_{i}$, $q_{i}$, $r_{i}$ are all uniformly bounded.
	
	Let $\Omega^{*}$ be the optimal solution to problem \eqref{d1}  consisting of all agents and let $\Psi^{*}$ be the optimal solution to the problem excluding the first agent. By replacing $A$, $B$ and $1$ with $W$, $V$ and $Y$ respectively,
	\begin{equation*}
	\lim_{N\to\infty} \begin{bmatrix}
	0_{(N-1)T\times T}& I_{(N-1)T}\\
	\end{bmatrix}\Omega^{*}-\Psi^{*}=0.
	\end{equation*}
	Let $\begin{bmatrix}
		0 & I\\
		\end{bmatrix}\Omega^{*}=\Phi^{*}$. From above, $\Phi_{i}^{*}-\Psi_{i}^{*}=O(\frac{1}{N})1$
where $\Phi_{i}$ and $\Psi_{i}$ is the $i$-th $T$-length component of $\Phi^{*}$ and $\Psi^{*}$, respectively. Hence,
\begin{align*}
&\frac{F_{total}}{H_{1}}=\frac{U_{1}^{*T}W_{1}U_{1}^{*}+V_{1}^{T}U_{1}^{*}+\sum_{i=2}^{N}(\Phi_{i}^{*T}W_{i}\Phi^{*}_{i}+V^{T}_{i}\Phi_{i}^{*})}{\sum_{i=2}^{N}(\Psi_{i}^{*T}W_{i}\Psi^{*}_{i}+V^{T}_{i}\Psi_{i}^{*})}=\\
&\frac{U_{1}^{*T}W_{1}U_{1}^{*}+V_{1}^{T}U_{1}^{*}+\sum_{i=2}^{N}(\Psi_{i}^{*T}W_{i}\Psi^{*}_{i}+V^{T}_{i}\Psi_{i}^{*}+G_{1})}{\sum_{i=2}^{N}(\Psi_{i}^{*T}W_{i}\Psi^{*}_{i}+V^{T}_{i}\Psi_{i}^{*})}
\end{align*}
\noindent where $G_{1}=(2\Psi_{i}^{*T}W_{i}1+V_{i}^{T}1)O(\frac{1}{N})+1^{T}W_{i}1\cdot O(\frac{1}{N^{2}})$. Since $\Psi_{i}^{*}=\Theta(1)1$, we have $\displaystyle\lim_{N\to\infty}\left(F_{total}^{N}/H_{1}^{N}\right)=1$. Similarly, for all other $i$, $\displaystyle\lim_{N\to\infty}\left(F_{total}^{N}/H_{i}^{N}\right)=1$.
Therefore, $\displaystyle\lim_{N\to\infty}\bar{c}^{N}=1$.
	
	Let $H_{min}=\min_{i}H_{i}$. Since $\frac{(N-1)F_{total}}{NH_{max}}\le\underline{c}^{N}\le\frac{(N-1)F_{total}}{NH_{min}}$, $\lim_{N\to\infty}\underline{c}^{N}=1$.
	Consequently, $\displaystyle\lim_{N\to\infty}c^{N}=1.$
		
	From Lemma \ref{lemma1},
	\begin{align*}
	\Psi^{*}-\Phi^{*}=\frac{-1}{2}W_{-1}^{-1}Y_{-1}(\Gamma W_{-1}^{-1}V_{-1}+(\Gamma-\Gamma_{-1})\Xi),
	\end{align*}
	\noindent where $W_{-1}$, $V_{-1}$ are formed by removing $W_{1}$ and $V_{1}$ from $W$ and $V$, respectively. $Y_{-1}=[I_{T};...;I_{T}]$ with $(N-1)$ $T$-dimensional identity matrix. $\Xi=Y_{-1}^{-1}W_{-1}^{-1}V_{-1}$ and $\Xi=O(N)1$. Similarly as in Theorem \ref{theo2}, 
	\begin{align*}
	&\lim_{N\to\infty} \left(\lambda^{*T}U_{1}^{*}-p_{1}^{N}\right)\\
	&=\lim_{N\to\infty}\Bigg[\frac{1}{2}(V_{1}^{T}W_{1}^{-1}+\Xi^{T})\Gamma^{T}\big[W_{1}^{-1}\Gamma\big(W_{1}^{-1}V_{1}+\Xi\big)\\
	&-W_{1}^{-1}V_{1}\big]-\bigg[\frac{1}{2}\Big[\left(\Gamma W_{-1}^{-1}V_{-1}+(\Gamma+\Gamma_{-1})\Xi\right)^{T}Y_{-1}^{T}-2V_{-1}\Big]^{T}\\
	&W_{-1}^{-1}W_{-1}+V_{-1}^{T}\bigg]\cdot\frac{-1}{2}W_{-1}^{-1}Y_{-1}(\Gamma W_{-1}^{-1}V_{-1}+(\Gamma -\Gamma_{-1})\Xi)\Bigg]\\
	&=\lim_{N\to\infty}\bigg(\frac{1}{4}\Xi^{T}\Big(\Gamma^{T}W^{-1}_{1}\Gamma+\Gamma-\Gamma_{-1}\Big)\Xi\bigg)
	\end{align*}
	It is straightforward to see that,
	\begin{align*}
	&\Gamma^{T}W^{-1}_{1}\Gamma+\Gamma-\Gamma_{-1}\\
	=&-\left(\sum_{i=1}^{N}W_{i}^{-1}\right)^{-1}W_{1}^{-1}\left(\sum_{i=1}^{N}W_{i}^{-1}\right)^{-1}W_{1}^{-1}\left(\sum_{i=2}^{N}W_{i}^{-1}\right)^{-1}\\
	=&O(\frac{1}{N^{3}}).
	\end{align*}
	Consequently, $\displaystyle\lim_{N\to\infty} \left(\lambda^{*T}U_{i}^{*}-p_{i}^{N}\right)=0. \qedhere$
\end{proof}

\section{Dynamic Stochastic VCG}\label{SCDC}
In the previous section, the VCG mechanism was naturally extended to deterministic dynamic systems by employing an open-loop solution. A new complication arises when agents are stochastic dynamic systems. The states of agents evolve randomly, and so we need to consider closed-loop control laws for each agent. Such closed-loop control laws depend on the observations of the agents, which are generally private random variables. Hence the problem therefore arises of ensuring that each agent reveals its ``true" observation \emph{at each and every time instant}. (Other unknowns such as system dynamic equations, noise statistics, and utility functions, can be considered as part of the first observation). 

However, a fundamental difficulty arises with respect to ensuring social welfare optimality of the stochastic dynamic system. Since an agent's intertemporal payoff depends on the future payments and allocations in a dynamic game, the agent's current bid need not maximize its current payoff. What's more, since dishonest bids distort current and future allocations in different ways, an agent's optimal bid will depend on others' bids.

To see this, it is sufficient to consider the case where all system parameters -- system dynamics, noise statistics, and utility functions -- are known to all agents, and where each agent can completely observe its own private state $x_{i}(t)$, with the only complication being that it cannot observe the states of other agents. For agent $i$, let $w_{i}(t)$ be the discrete-time noise process affecting state $x_{i}(t)$ via the state evolution equation:
	\begin{equation*}
	x_{i}(t+1)=g_{i}(x_{i}(t), u_{i}(t), w_{i}(t)),
	\end{equation*}
	where $x_{i}(0)$ is independent of $w_{i}$. An LQG model for wind turbine control can be found in \cite{6713053}, where $w(t)$ is the turbine system noise. The uncertainties of all the agents are independent. The ISO aims to maximize the social welfare:
	
	\begin{equation*}
	\max \ \mathbb{E}\sum_{i=1}^{N}\sum_{t=0}^{T-1}F_{i}(x_{i}(t),u_{i}(t))
	\end{equation*}
	\noindent subject to
	\begin{equation}
	x_{i}(t+1)=g_{i}(x_{i}(t), u_{i}(t), w_{i}(t)),\forall t.
	\end{equation}
	\begin{equation}
	\sum_{i=1}^{N}u_{i}(t)=0,\text{ for }\forall t.
	\end{equation}
We will assume that $F_{i}$, $g_{i}$ and the distributions of the uncertainties are known to the ISO. We comment in the sequel on the further difficulty that arises when they are unknown. We focus on the issue of truth-telling by the agents of their states. 

Suppose that agents bid their states $x_{i}(t)$ as $\hat{x}_{i}(t)$. 
	A straightforward extension of the static Groves mechanism, which we will see does not work, would be to collect a payment $p_{i}(t)$ at time $t$ from agent $i$, defined as
	\begin{align*}
	&p_{i}(t)\\
	=&h_{i}(\hat{X}_{-i}(t))-\mathbb{E}\sum_{j\ne i}\sum_{\tau=t}^{T-1}\left[F_{j}(x_{j,t}(\tau),u_{j,t}^{*}(\tau))\mid X(t)=\hat{X}(t)\right],
	\end{align*}
	\noindent where $\hat{x}_{i}(t)$ is what agent $i$ bids for his state at time $t$, $\hat{X}_{-i}(t)=[\hat{x}_{1}(t),...,\hat{x}_{i-1}(t),\hat{x}_{i+1}(t),...,\hat{x}_{N}(t)]^{T}$, $u_{j,t}^{*}(\tau)$ is the optimal solution to the following problem:
	\begin{equation*}
	\max\ \mathbb{E}\sum_{j=1}^{N}\sum_{\tau=t}^{T-1}\left[F_{j}(x_{j}(\tau),u_{j}(\tau))\mid X(t)=\hat{X}(t)\right]
	\end{equation*}
	\noindent subject to
	\begin{equation*}
	x_{j}(\tau+1)=g_{j}(x_{j}(\tau),u_{j}(\tau),w_{j}(\tau)),
	\end{equation*}
	\begin{equation}
	\sum_{j=1}^{N}u_{j}(\tau)=0,\text{ for }t\le\tau\le T-1,
	\end{equation}
	\begin{equation*}
	\hat{X}(t)=[\hat{x}_{1}(t),...,\hat{x}_{N}(t)]^{T},
	\end{equation*}
	and $x_{j,t}(\tau+1)=g_{j}(x_{j,t}(\tau),u_{j,t}(\tau),w_{j,t}(\tau))$.
	It is easy to verify that truth-telling of states by all agents forms a subgame perfect Nash equilibrium since truth-telling of $x_{i}(t)$ for agent $i$ is a best response given that all other agents bid truthfully for all $\tau \geq t$. 
	
	However, truth-telling of states does \emph{not} constitute a dominant strategy because another agent $j$ may bid $\hat{x}_{j}(t+1)$ at time $t+1$ truthfully, but lie about the state $x_j(t)$ at time $t$ in order to obtain a preferable state at the next time $t+1$. More specifically, if we assume all agents will bid truthfully from $t+1$ onward, then at time $t$, if agent $j$ bids some untruthful $\hat{x}_{j}(t)$, truth-telling of state for agent $i$ will be an optimal strategy only if agent $j$ continues to bid ``an untruthful but consistent'' $\hat{x}_{j}(t)$ which stems from his untruthful bid $\hat{x}_{j}(t)$. By ``consistent" we mean the state that would result from the untruthful $\hat{x}_{j}(t)$ but with the truthful state noise $w_j(t)$. In other words, agent $i$'s will bid truthfully only if agent $j$ ``consistently'' lies about his state, which is not guaranteed using the above payment scheme. 
	
	This additional complication precludes a dominant strategy solution for general stochastic dynamic systems even in the \emph{completely observed case}, and even when all \emph{system parameters} (the system dynamic equations, the noise statistics, and the utility functions) are known to all agents. We conjecture that there does not exist a dominant strategy for each agent that ensures social welfare optimality even in this special context, when agents are general stochastic dynamic systems. All that one can possibly hope for in general is a subgame perfect Nash equilibrium where truth telling by an agent is optimal when all other agents are telling the truth.
	
	We show in the sequel that there is one important exception: LQG agents.\section{Linear Quadratic Gaussian Systems}\label{LQGS}
	We now show that while an incentive compatible strategy presents fundamental challenges for general stochastic dynamic systems even when system parameters are known to all, as noted above, there is a solution for LQG systems in both the \emph{completely observed case} where each agent observes its own private state, and in the \emph{partially observed case} where each agent observes a linear transformation of its state corrupted by white Gaussian noise. In fact, we show that one can then even obtain the following stronger property:
\begin{defn}
We say that a mechanism attains a \emph{subgame perfect dominant strategy equilibrium} if, at every time $t$, truth-telling by each agent of its remaining future private observations is optimal with respect to the conditional expectation of its remaining net utility, irrespective of the strategies of other agents in the future.
\end{defn}

We investigate the structure of LQG systems more carefully. We begin by considering the completely observed case. For agent $i$, let $w_{i}(t)\sim\mathcal{N}(0,\sigma_{i})$ be the discrete-time additive Gaussian white noise process affecting state $x_{i}(t)$ via:
	\begin{equation*}
	x_{i}(t+1)=a_{i}x_{i}(t)+b_{i}u_{i}(t)+w_{i}(t),
	\end{equation*}
	where $x_{i}(0)\sim\mathcal{N}(0,\zeta_{i})$ and is independent of $w_{i}$.
	Each agent has a one-step utility function
	\begin{equation*}
	F_{i}(x_{i}(t),u_{i}(t))=q_{i}x_{i}^{2}(t)+r_{i}u_{i}^{2}(t).
	\end{equation*} 
We suppose that $q_{i} \leq 0$ and $r_{i} < 0$. Let $X(t)=[x_{1}(t),...,x_{N}(t)]^{T}$, $U(t)=[u_{1}(t),...,u_{N}(t)]^{T}$ and $W(t)=[w_{1}(t),...,w_{N}(t)]^{T}$. Let $Q=diag(q_{1},...,q_{N})\leq 0$, $R=diag(r_{1},...,r_{N})<0$, $A=diag(a_{1},...,a_{N})$, $B=diag(b_{1},...,b_{N})$, $\Sigma=diag(\sigma_{1},...,\sigma_{N})>0$ and $Z=diag(\zeta_{1},...,\zeta_{N})>0$. We assume that the ISO knows the true system parameters $A$, $B$, $Q$ and $R$.

Let $RSW:=\sum_{t=0}^{T-1}[X^{T}(t)QX(t)+U^{T}(t)RU(t)]$ be the \emph{random} social welfare, i.e., the variable whose expectation is the social welfare of the agents, and let $SW:=\mathbb{E}[RSW]$ denote the (\emph{expected}) social welfare. $RSW$ could also be called the ``ex-post social welfare'', while $SW$ could be called the ``ex-ante social welfare." 

The ISO aims to maximize the social welfare:

	\begin{equation*}
	\max \ \mathbb{E}\sum_{t=0}^{T-1}\left[X^{T}(t)QX(t)+U^{T}(t)RU(t)\right]
	\end{equation*}
	\noindent subject to
	\begin{equation*}
	X(t+1)=AX(t)+BU(t)+W(t),
	\end{equation*}
	\begin{equation}\label{ubalance}
	1^{T}U(t)=0,\forall t,
	\end{equation}
	\begin{equation*}
	X(0)\sim\mathcal{N}(0,Z), W\sim\mathcal{N}(0,\Sigma).
	\end{equation*}

The key to obtaining subgame perfect dominance is to introduce a ``layered'' payment structure which ensures incentive compatibility for LQG systems. We begin by rewriting the random social welfare, and thereby also the social welfare, in terms more convenient for us. We will decompose  the state $X(t)$ of the entire system comprised of all agents as:
	\begin{equation}\label{xdecompose}
	X(t):=\sum_{s=0}^{t}X(s,t),\ 0\le t\le T-1,
	\end{equation}
	where $X(s,s) := W(s-1)$ for $s\geq 1$ and $X(0,0):=X(0)$. Let
	\begin{equation}\label{projection}
	X(s,t) := AX(s,t-1)+BU(s,t-1), \text{ } 0 \leq s \leq t-1,
	\end{equation}
	with $U(s,t)$ yet to be specified. We suppose that $U(t)$ can also be decomposed as:
	\begin{equation}\label{udecompose}
	U(t):=\sum_{s=0}^{t}U(s,t),\ 0\le t \le T-1.
	\end{equation}
	Then \emph{regardless of how the $U(s,t)$'s are chosen}, as long as the $U(s,t)$'s for $0 \leq s \leq t$ are indeed a decomposition of $U(t)$, i.e., \eqref{udecompose} is satisfied, the random social welfare can be written in terms of $X(s,t)$'s and $U(s,t)$'s as:	
	\begin{equation*}
	RSW=\sum_{s=0}^{T-1}L_{s},
	\end{equation*}
	where $L_{s}$ for $s\ge 1$ is defined as:
	\begin{align}\label{ls}
	L_{s}:&=\sum_{t=s}^{T-1}\bigg[X^{T}(s,t)QX(s,t)+U^{T}(s,t)RU(s,t)\\\nonumber+&2\left(\sum_{\tau=0}^{s-1}X(\tau,t)\right)QX(s,t)+2\left(\sum_{\tau=0}^{s-1}U(\tau,t)\right)RU(s,t)\bigg],
	\end{align}
	and $L_{0}$ is defined as:
	\begin{equation*}
	L_{0}:=\sum_{t=0}^{T-1}\Big[X^{T}(0,t)QX(0,t)+U^{T}(0,t)RU(0,t)\Big].
	\end{equation*}
	Hence,
	$SW=\mathbb{E}\sum_{s=0}^{T-1}L_{s}.$
		
	In the scheme to follow, the ISO will choose all $U(s,t)$'s for future $t$'s at time $s$, based on the information it has at time $s$.  Hence $X(s,t)$ is completely determined by $W(s-1)$, and $U(s,t)$ for $s \leq t \leq T-1$. Indeed $X(s,t)$ can be regarded as the contribution to $X(t)$ of these variables.

	We now define the \emph{LQG ISO Mechanism}. Instead of asking agents to bid their state, we will consider a scheme where agents will be asked to bid their \emph{state noises}. At each stage $s$, the ISO asks each agent $i$ to bid its $x_{i}(s,s)$, defined as equal to $w_{i}(s-1)$. Let $\hat{x}_{i}(s,s)$ be what the agent actually bids, since it may not tell the truth. Based on their bids \{$\hat{x}_i(s,s)$ for $1 \leq i \leq N$\}, the ISO solves the following problem:		
	\begin{equation*}
	\max \ L_{s}
	\end{equation*} 
	\noindent for the system
	\begin{equation*}
	\hat{X}(s,t)=A\hat{X}(s,t-1)+BU(s,t-1),\text{ for }t>s,
	\end{equation*}
	with
	\begin{equation*}
	\hat{X}(s,s)=[\hat{x}_{1}(s,s),...,\hat{x}_{N}(s,s)]^{T},
	\end{equation*}
	subject to the constraint
	\begin{equation*}
	1^{T}U(s,t)=0,\text{ for } s\le t \le T-1.
	\end{equation*}
Here $\hat{X}(s,t)$ is the zero-noise state variable updates starting from the ``initial condition'' $\hat{X}(s,s)$. Let $U^{*}(s,t)$ denote the optimal solution.
	
	The interpretation is the following. Based on the bids, $\hat{X}(s,s)$, which is supposedly a bid of $W(s-1)$, the ISO calculates the trajectory of the linear systems from time $s$ onward, assuming zero state noise from that point on. It then allocates consumptions/generations $U(s,t)$ for future periods $t$ for the corresponding deterministic linear system, with balance of consumption and production \eqref{ubalance} at each time $t$. These can be regarded as generation/consumption allocations taking into account the consequences of the disturbance occurring at time $s$. Thereby we are decomposing the behavior of the system into separate effects caused by the state noise random variables occurring at different times.
	
	Next, the ISO collects a payment $p_{i}(s)$ from agent $i$ at time $s$ as:
	\begin{align*}
	p_{i}(s):=h_{i}(\hat{X}_{-i}(s,s))-\sum_{j\ne i}\sum_{t=s}^{T-1}\bigg[q_{j}\hat{x}_{j}^{2}(s,t)+r_{j}u_{j}^{*2}(s,t)\\+2q_{j}\left(\sum_{\tau=0}^{s-1}\hat{x}_{j}(\tau,t)\right)\hat{x}_{j}(s,t)+2r_{j}\left(\sum_{\tau=0}^{s-1}u_{j}(\tau,t)\right)u_{j}^{*}(s,t)\bigg],
	\end{align*}
	where $\hat{X}_{-i}(s,s)=[\hat{x}_{1}(s,s),...,\hat{x}_{i-1}(s,s),\hat{x}_{i+1}(s,s),...\\
	,\hat{x}_{N}(s,s)]^{T}$, and $h_{i}$ is any arbitrary function (as in the Groves mechanism).
	
	Before proving incentive compatibility, we need to define what is meant by ``rationality" of an agent in a dynamic system where each agent has to take actions at different times. 
	\begin{defn}
	Rational Agents: We say agent $i$ is \emph{rational at time} $T-1$, if it adopts a dominant strategy whenever there exists a unique dominant strategy. An agent $i$ is \emph{rational at time} $t$ if it adopts a dominant strategy at time $t$ under the assumption that all agents including itself are rational at times $t+1, t+2, ..., T-1$, whenever there is a unique such dominant strategy at time $t$.
	\end{defn}
	 A critical property of the LQ problem is that the optimal feedback gain does not depend on the state. A key result needed to show dominance of truth telling is to achieve ``intertemporal" decoupling.  In the case of the quadratic cost, this will be achieved by showing the following Lemma in the sequel: 
\begin{lemma}\label{decoupling}
(Intertemporal Decoupling for LQG Agents) Let $\mathcal{H}_{k}$ be the history up to time $k$, $\mathcal{H}_{k}:=\{w_{i}(t): 1\le i\le N, 0\le t\le k\}$. Then, under the LQG ISO Mechanism,  if system parameters $Q\le 0$, $R<0$, $A$ and $B$ are known, and agents are rational, then for $k<t\le T-1$,
\[
\mathbb{E}[X(k,t)|\mathcal{H}_{k-1}]=0, \text{ and }\mathbb{E}[U(k,t)|\mathcal{H}_{k-1}]=0.
\]
\end{lemma}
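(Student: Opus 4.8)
The plan is to exploit the layered structure to show that, under the stated hypotheses, the stage-$k$ contribution $\big(X(k,t),U(k,t)\big)$ to the trajectory depends on \emph{only} the fresh stage-$k$ disturbance $X(k,k)=W(k-1)$, and does so \emph{linearly and homogeneously} --- no additive term, no dependence on anything generated before stage $k$. Since $W(k-1)\sim\mathcal N(0,\Sigma)$ is mean-zero and independent of $\mathcal H_{k-1}$ (the randomness accumulated before stage $k$), both conditional expectations then vanish. At the outset I would record that ``agents are rational'' makes truth-telling the bid actually played at every stage, i.e. $\hat x_i(s,s)=w_i(s-1)=x_i(s,s)$ for all $s,i$; establishing this is the incentive-compatibility statement that Lemma~\ref{decoupling} is proved together with, by a backward induction over stages in which the decoupling identities below are themselves what force truth-telling to be the unique dominant stage-$k$ bid given truthful play at stages $k+1,\dots,T-1$.

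The first and main step is to reduce the ISO's stage-$s$ optimization of $L_s$ in \eqref{ls} to an ordinary balance-constrained LQR. I would introduce the \emph{aggregated} variables $Y(s,t):=\sum_{\tau=0}^{s}X(\tau,t)$ and $V(s,t):=\sum_{\tau=0}^{s}U(\tau,t)$, so that $X(s,t)=Y(s,t)-Y(s-1,t)$ and $U(s,t)=V(s,t)-V(s-1,t)$, with $Y(s-1,\cdot)$ and $V(s-1,\cdot)$ already committed before stage $s$. Completing the square on the two cross-terms in \eqref{ls} turns $L_s$ into $\sum_{t=s}^{T-1}\big[Y^{T}(s,t)QY(s,t)+V^{T}(s,t)RV(s,t)\big]$ plus a constant that does not involve the stage-$s$ decision; the recursion \eqref{projection} aggregates to $Y(s,t+1)=AY(s,t)+BV(s,t)$; the earlier per-stage balance constraints sum to $1^{T}V(s,t)=0$; and the decomposition \eqref{xdecompose} gives $Y(s,s)=\sum_{\tau\le s}X(\tau,s)=X(s)$, the \emph{actual} aggregate state at time $s$ (here using that the played bids are truthful). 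Hence the stage-$s$ problem is exactly the standard finite-horizon, balance-constrained LQ problem with data $A,B,Q\le 0,R<0$ started from $X(s)$.

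I would then invoke the textbook fact for this LQ problem: its optimizer is a state feedback $V^{*}(s,t)=K_{t}Y^{*}(s,t)$ whose gains $K_{t}$ are generated by a constraint-modified Riccati recursion depending only on $A,B,Q,R$ and the horizon endpoint $T-1$ --- not on $X(s)$, not on the stage $s$, and not on the committed earlier data --- so that $Y^{*}(s,t)=\Phi(t,s)X(s)$ with $\Phi(t,s):=\prod_{\tau=s}^{t-1}(A+BK_{\tau})$ and $\Phi(s,s)=I$. Differencing across consecutive stages and using the closed-loop state recursion $X(s)=(A+BK_{s-1})X(s-1)+W(s-1)=\Phi(s,s-1)X(s-1)+W(s-1)$ together with the semigroup identity $\Phi(t,s)\Phi(s,s-1)=\Phi(t,s-1)$ collapses the telescoping terms and yields, for every $s$ and $t\ge s$, $X(s,t)=\Phi(t,s)W(s-1)$ and $U(s,t)=K_{t}\Phi(t,s)W(s-1)$ (with $W(s-1)$ read as $X(0)$ when $s=0$). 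Specializing to $s=k$ and $k<t\le T-1$, both $X(k,t)=\Phi(t,k)W(k-1)$ and $U(k,t)=K_{t}\Phi(t,k)W(k-1)$ are linear homogeneous in $W(k-1)$, so $\mathbb E[X(k,t)\mid\mathcal H_{k-1}]=\Phi(t,k)\,\mathbb E[W(k-1)]=0$ and likewise $\mathbb E[U(k,t)\mid\mathcal H_{k-1}]=0$.

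The step I expect to be the main obstacle is the reduction in the second paragraph: one must verify that the cross-terms $2\big(\sum_{\tau<s}X(\tau,t)\big)^{T}QX(s,t)$ and $2\big(\sum_{\tau<s}U(\tau,t)\big)^{T}RU(s,t)$ in \eqref{ls} complete the square \emph{exactly} against the already-committed earlier-stage contributions, that the aggregated dynamics and the aggregated balance constraint close up precisely, and that $Y(s,s)$ is genuinely the actual state $X(s)$ --- all of which hinges on careful bookkeeping of \eqref{xdecompose}, \eqref{projection} and \eqref{udecompose}. The remaining care is in the ``rational $\Rightarrow$ truthful'' input: it is legitimate only within the joint backward induction, where at stage $k$ one uses the decoupling identities to show that, given truthful play at later stages, truth-telling is the unique dominant stage-$k$ bid, thereby justifying rationality --- hence truthful play --- at stage $k$ and closing the induction.
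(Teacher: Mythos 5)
Your proposal is correct and follows essentially the same route as the paper: completing the square in $L_k$ to reduce the stage-$k$ problem to a standard balance-constrained LQ problem in the aggregated variables started from $X(k)$, using the state-independence of the optimal feedback gain, differencing consecutive stages to isolate the sole dependence of $(X(k,t),U(k,t))$ on $W(k-1)$, and concluding from the zero mean and independence of $W(k-1)$ from $\mathcal{H}_{k-1}$. Your closed-form expressions $X(k,t)=\Phi(t,k)W(k-1)$ and $U(k,t)=K_{t}\Phi(t,k)W(k-1)$ merely make explicit what the paper summarizes as ``by linearity of the system,'' and your handling of the rationality/truth-telling induction matches the paper's.
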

The proof of this is provided as part of the following overarching result.
	
	\begin{theorem}\label{stochasticvcg}
		Truth-telling of state $\hat{x}_{i}(s,s)$ for $0\le s \le T-1$, i.e., bidding $\hat{x}_i(s,s) = w_i(s-1)$, is the unique dominant strategy for each agent $i$ under the LQG ISO Mechanism, if system parameters $Q\le 0$, $R<0$, $A$ and $B$ are known, and agents are rational. The LQG ISO Mechanism achieves social welfare optimization.
	\end{theorem}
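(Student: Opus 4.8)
The plan is to run a backward induction over the bidding stages $s = T-1, T-2, \dots, 0$, reducing the incentive problem at each stage to an essentially static Groves argument by exploiting the layered decomposition $RSW = \sum_{s} L_{s}$ together with the intertemporal decoupling of Lemma~\ref{decoupling}. The first step is to observe that, once the allocations are those produced by the ISO's stage-$s$ subproblems, agent $i$'s total realized net utility splits as $\sum_{s=0}^{T-1}\big(L_{i,s} - p_i(s)\big)$, where $L_{i,s}$ is agent $i$'s summand of $L_{s}$, and that the payment $p_i(s)$ has exactly the per-layer Groves form: when agent $i$ has reported truthfully through stage $s-1$, one gets $L_{i,s} - p_i(s) = \big(\text{the ISO's stage-}s\text{ objective } L_s \text{ evaluated at agent } i\text{'s \emph{true} layer-}s\text{ initial condition } w_i(s-1) \text{ and the allocated controls}\big) - h_i(\hat X_{-i}(s,s))$. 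A report $\hat x_i(s,s)$ then affects agent $i$'s payoff in only two ways: (i) it fixes the layer-$s$ controls $U^{*}(s,\cdot)$ and hence the stage-$s$ term $L_{i,s} - p_i(s)$; and (ii) the resulting layer-$s$ state $X(s,t)$ enters the interaction terms $2\big(\sum_{\tau<s'}X(\tau,t)\big)QX(s',t)$ and $2\big(\sum_{\tau<s'}U(\tau,t)\big)RU(s',t)$ of every later $L_{s'}$ with $s'>s$, and the analogous terms of $p_i(s')$.

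For the base case $s = T-1$, only the $t = T-1$ term of $L_{T-1}$ is present, agent $i$'s own state contribution there is the fixed quantity $x_i(T-1,T-1) = w_i(T-2)$, and bidding truthfully makes the ISO solve exactly $\max_{U(T-1,\cdot)} L_{T-1}$ with agent $i$'s true initial condition, so no other report can do better; the assumed uniqueness of the LQ optimum makes every other report strictly worse. This is the same pathwise reasoning as in Theorem~\ref{dtruth}. For the inductive step at stage $s$, assume truth-telling is the unique dominant strategy at stages $s+1,\dots,T-1$, so by rationality all agents bid truthfully from $s+1$ on. Effect (i) is then handled as in the base case: conditioned on agent $i$'s information at stage $s$, $L_{i,s} - p_i(s)$ is a deterministic function of the stage-$s$ reports that agent $i$ maximizes, uniquely, by reporting $w_i(s-1)$. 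Effect (ii) is where Lemma~\ref{decoupling} does the work: with all agents truthful from $s+1$ on, each layer subproblem is a constrained LQ problem whose optimal control is a \emph{state-independent linear feedback} in that layer's state, so $X(s',t)$ and $U(s',t)$ are linear in the fresh increment $W(s'-1)$, which is independent of the history $\mathcal{H}_{s'-1}$ and has mean zero; hence $\mathbb{E}[X(s',t)\mid\mathcal{H}_{s'-1}] = 0 = \mathbb{E}[U(s',t)\mid\mathcal{H}_{s'-1}]$. Since $\sum_{\tau<s'}X(\tau,t)$ and $\sum_{\tau<s'}U(\tau,t)$ — the only places agent $i$'s stage-$s$ report enters a later $L_{s'}$ or $p_i(s')$ — are $\mathcal{H}_{s'-1}$-measurable, the tower property annihilates all those interaction terms in expectation, so $\mathbb{E}\big[\sum_{s'>s}(L_{i,s'}-p_i(s'))\mid\text{agent }i\text{'s stage-}s\text{ info}\big]$ does not depend on $\hat x_i(s,s)$. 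Combining (i) and (ii), truth-telling at stage $s$ is the unique maximizer of agent $i$'s expected remaining net utility; this closes the induction and proves Lemma~\ref{decoupling} en route. Finally, social welfare optimality: with all agents truthful the ISO's stage-$s$ problems are fed the true increments $W(s-1)$, and by the equivalence of open-loop and closed-loop optimal control for LQ problems together with superposition of the linear optimal feedback across noise layers, $\sum_{s}U^{*}(s,t)$ reassembles the globally optimal closed-loop LQG policy; equivalently, decoupling forces the expected contribution of every interaction term to vanish, so the ISO's stagewise deterministic maximization of $L_{s}$ realizes $\max \mathbb{E}[RSW]$.

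The main obstacle is effect (ii): showing a stage-$s$ misreport cannot be parlayed through the later periods. This is exactly where general stochastic agents fail and where the LQG structure is indispensable — the layered construction guarantees that a lie at stage $s$ contaminates only layer $s$ (the ``consistent lying'' pathology of the naive extension is absent because later reports concern genuinely new noise), and the decoupling lemma, resting on the state-independence of the LQ feedback gain, is precisely what makes each layer's contribution to future states and controls a conditionally mean-zero (martingale-difference-like) object whose interaction with earlier layers washes out in conditional expectation. A secondary technical burden is the bookkeeping that verifies the layered payment $p_i(s)$ telescopes against agent $i$'s \emph{true} (rather than reported) layer costs, and that the uniqueness hypotheses on the LQ subproblems upgrade ``dominant'' to ``uniquely dominant'' at every stage, so that rationality pins down truth-telling and the backward induction is legitimate. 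The partially observed case should follow by the same scheme with layer states replaced by their Kalman estimates.
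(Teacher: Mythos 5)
Your proposal is correct and follows essentially the same route as the paper's proof: backward induction from $s=T-1$, a per-layer static Groves argument for the stage-$s$ term, and the intertemporal decoupling lemma --- proved exactly as you do, via the state-independence of the constrained LQ feedback gain, which makes $U^{*}(s',t)$ and $X(s',t)$ linear in the fresh noise $W(s'-1)$ and hence conditionally mean-zero --- to kill the cross-layer interaction terms in expectation. The only difference is presentational: the paper expands $\mathbb{E}[x_i^2(t)\mid\mathcal{H}_{k-1}]$ and $\mathbb{E}[u_i^2(t)\mid\mathcal{H}_{k-1}]$ term by term to show $J_i(k-1)$ reassembles the form of $L_{k-1}$, whereas you argue directly that the cross-layer terms vanish so only the per-layer Groves problems survive; these are the same computation.
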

	\begin{proof}
	We show the result by backward induction. Let Agent $j$, $j\ne i$ bid $\hat{x}_{j}(s,s)$ at time $s$. Given the bids $\hat{x}_j(s,s)$ of other agents, let $J_{i}(s)$ be the net utility of agent $i$ from time $s$ onward if it bids truthful $x_{i}(s,s)$, i.e., $w_{i}(s-1)$, and let $\hat{J}_{i}(s)$ be the net utility if it bids possibly untruthful $\hat{x}_{i}(s,s)$. Let $U^{*}(s,t)$ be the ISO's assignments if agent $i$ bids truthfully and let $\hat{U}^{*}(s,t)$ be the ISO's assignments if agent $i$ bids untruthfully. 
		
		We first consider time $T-1$, since we are employing backward induction. Suppose that $x_i(s,T-1)$ for $0 \leq s \leq T-2$ were the past bids, and $u_i(s,T-1)$ for $0 \leq s \leq T-2$,
were those portions of the allocations for the future already decided in the past. Our interest is on analyzing what should be the current bid $x_i(T-1,T-1)$, and the consequent additional allocation $u_i(T-1,T-1)$. Now
	$x_{i}(s,T-1)$ for $0\le s\le T-2$ depend only on previous bids $x_{i}(s,s)$, and thus those terms can be treated as constants. In addition, the $h_{i}$ term depends only on other agents' bids. As a consequence, when comparing $J_{i}(T-1)$ with $\hat{J}_{i}(T-1)$, one can just regard $h_{i} \equiv 0$. Hence
	    \begin{align*}
	    &J_{i}(T-1)=q_{i}x_{i}^{2}(T-1,T-1)+r_{i}u_{i}^{*2}(T-1,T-1)\\&+2q_{i}\left(\sum_{s=0}^{T-2}x_{i}(s,T-1)\right)x_{i}(T-1,T-1)\\&+2r_{i}\left(\sum_{s=0}^{T-2}u_{i}(s,T-1)\right)u_{i}^{*}(T-1,T-1)\\&+\sum_{j\ne i}\bigg[q_{j}\hat{x}_{j}^{2}(T-1,T-1)+r_{j}u_{j}^{*2}(T-1,T-1)\\&+2q_{j}\left(\sum_{\tau=0}^{T-2}\hat{x}_{j}(\tau,T-1)\right)\hat{x}_{j}(T-1,T-1)\\&+2r_{j}\left(\sum_{\tau=0}^{T-2}u_{j}(\tau,T-1)\right)u_{j}^{*}(T-1,T-1)\bigg].
	    \end{align*}
	    It is seen that $J_{i}(T-1)$ is of the same form as $L_{T-1}$. $\hat{J}_{i}(T-1)$ is obtained by replacing $u_{i}^{*}$ with $\hat{u}_{i}^{*}$. We conclude that $J_{i}(T-1)\ge\hat{J}_{i}(T-1)$ because $u_{i}^{*}$ is the optimal solution to $L_{T-1}$ when $\hat{x}_{i}(T-1,T-1)=x_{i}(T-1,T-1)$. Moreover truth telling is the unique optimal strategy since it is a finite-horizon, discrete-time LQR problem.
	    
	    We next employ induction, and so assume that truth-telling of states is the unique subgame perfect dominant strategy equilibrium at time $k$. Let $\mathcal{H}_{t}$ be the history up to time $t$. If agents are rational, we can take the expectation over future $X(s,s)$, $s\ge k$, which are i.i.d. Gaussian noise vectors, and calculate $J_{i}(k-1)$ (where, as before, we simply take the first Groves term $h_{i} \equiv 0$):
	    \begin{align}\label{star}
	    \begin{split}
	    &J_{i}(k-1)=\\
	    &q_{i}x_{i}^{2}(k-1)+r_{i}u_{i}^{2}(k-1)-p_{i}(k-1)+\mathbb{E}\left[J_{i}(k)|\mathcal{H}_{k-1}\right]\\
	    &=q_{i}\left[x_{i}(k-1,k-1)+\sum_{s=0}^{k-2}x_{i}(s,k-1)\right]^{2}\\
	    &+r_{i}\left[u_{i}(k-1,k-1)+\sum_{s=0}^{k-2}u_{i}(s,k-1)\right]^{2}-p_{i}(k-1)\\
	    &+\mathbb{E}\left[\sum_{t=k}^{T-1}\left(q_{i}x_{i}^{2}(t)+r_{i}u_{i}^{2}(t)-p_{i}(t)\right)\bigg|\mathcal{H}_{k-1}\right].
	    \end{split}
	    \end{align}
	    We now prove Lemma \ref{decoupling}. We first show that $\mathbb{E}[U^{*}(k,k)|\mathcal{H}_{k-1}]=0$. By completing the square for $L_{k}$ in \eqref{ls}, we have the following equivalent problem for the ISO to solve for the $k$-th layer:
	    \begin{multline}\label{completionofsquare}
	    \max\ \sum_{t=k}^{T-1}\Bigg[\left(X(k,t)+\sum_{\tau=0}^{k-1}X(\tau,t)\right)^{T}Q\cdot\\
	    \left(X(k,t)+\sum_{\tau=0}^{k-1}X(\tau,t)\right)
	    +\left(U(k,t)+\sum_{\tau=0}^{k-1}U(\tau,t)\right)^{T}R\cdot\\
	    \left(U(k,t)+\sum_{\tau=0}^{k-1}U(\tau,t)\right)\Bigg].
	    \end{multline}
	    Now, for the fixed $k$ of interest, letting $Y(t) := X(k,t)+\sum_{\tau=0}^{k-1} X(\tau,t)$, and $V(t) := U(k,t) + \sum_{\tau=0}^{k-1} U(\tau,t)$, we have 
$Y(t) = AY(t-1)+BV(t-1)$ for $t \geq k+1$. The ``initial" condition is $Y(k) = X(k)$. For this linear system, the optimal control law for cost \eqref{completionofsquare} under the balancing constraint for all $t$ is linear in the state \cite{8274944}. Denoting the optimal gain by $K(t)$ (whose calculation can be found in \cite{Kumar1986}),
	    \begin{equation*}
	    U^{*}(k,k)+\sum_{\tau=0}^{k-1}U(\tau,k)=K(k)\left[X(k,k)+\sum_{\tau=0}^{k-1}X(\tau,k)\right].
	    \end{equation*}
	    Similarly, at time $k-1$, the ISO chooses the allocation at time $k$ by using the same gain $K(t)$ applied to that portion of the state at time $k$ resulting from disturbances up to time $k-1$:
	    \begin{align*}
	    U(k-1,k)+\sum_{\tau=0}^{k-2}U(\tau,k)=\sum_{\tau=0}^{k-1}U(\tau,k)=K(k)\cdot\\
	    \left[X(k-1,k)+\sum_{\tau=0}^{k-2}X(\tau,k)\right]=K(k)\left[\sum_{\tau=0}^{k-1}X(\tau,k)\right].
	    \end{align*}
	    Consequently,
	    \begin{equation*}
	    \mathbb{E}[U^{*}(k,k)|\mathcal{H}_{k-1}]=K(k)\mathbb{E}[X(k,k)|\mathcal{H}_{k-1}]=0,
	    \end{equation*}
	    since all agents are truth-telling at time $k$, i.e., $\mathbb{E}[X(k,k)|\mathcal{H}_{k-1}]=\mathbb{E}[W(k-1)]=0$. From \eqref{projection}, by linearity of the system, $\mathbb{E}[X(k,t)|\mathcal{H}_{k-1}]=0$, $k<t\le T-1$, and $\mathbb{E}[U(k,t)|\mathcal{H}_{k-1}]=0$, $k<t\le T-1$.
	    Therefore, for $k\le t\le T-1$,
	    \begin{align*}
	    &\mathbb{E}[x_{i}^{2}(t)|\mathcal{H}_{k-1}]=\mathbb{E}\left[\sum_{\tau=k}^{t}x_{i}(\tau,t)+\sum_{s=0}^{k-1}x_{i}(s,t)\right]^{2}\\
	    &=\left[\sum_{s=0}^{k-1}x_{i}(s,t)\right]^{2}+C=x_{i}^{2}(k-1,t)+2x_{i}(k-1,t)\cdot\\
	    &\sum_{s=0}^{k-2}x_{i}(s,t)+\left[\sum_{s=0}^{k-2}x_{i}(s,t)\right]^{2}+C,
	    \end{align*}
	    where $C$ is a fixed term corresponding to the variance of $\sum_{\tau=k}^t x_{i}(\tau, t)$ and $\left[\sum_{s=0}^{k-2}x_{i}(s,t)\right]^{2}$ can be treated as a constant since it depends only previous bids. Similarly, for $t \geq k$, 
	    \begin{align*}
	    &\mathbb{E}[u_{i}^{2}(t)|\mathcal{H}_{k-1}]=\\
	    &u_{i}^{2}(k-1,t)+2u_{i}(k-1,t)\sum_{s=0}^{k-2}u_{i}(s,t)+\left[\sum_{s=0}^{k-2}u_{i}(s,t)\right]^{2}+C,
	    \end{align*}
	    We also have,
	    \begin{equation*}
	    \mathbb{E}[p_{i}(t)|\mathcal{H}_{k-1}]=const.,
	    \end{equation*}
	    since $\mathbb{E}[x_{j}(t,\tau)|\mathcal{H}_{k-1}]=0$ and $\mathbb{E}[u_{j}(t,\tau)|\mathcal{H}_{k-1}]=0$, for $\tau\ge t$. By ignoring the constant term,
	    \begin{align*}
	    &J_{i}(k-1)=\\
	     &q_{i}x_{i}^{2}(k-1,k-1)+2q_{i}x_{i}(k-1,k-1)\sum_{s=0}^{k-2}x_{i}(s,k-1)\\
	    +&r_{i}u_{i}^{2}(k-1,k-1)+2r_{i}u_{i}(k-1,k-1)\sum_{s=0}^{k-2}u_{i}(s,k-1)\\
	    +&\sum_{t=k}^{T-1}\left[q_{i}x_{i}^{2}(k-1,t)+2q_{i}x_{i}(k-1,t)\sum_{s=0}^{k-2}x_{i}(s,t)\right]\\
	    +&\sum_{t=k}^{T-1}\left[r_{i}u_{i}^{2}(k-1,t)+2r_{i}u_{i}(k-1,t)\sum_{s=0}^{k-2}u_{i}(s,t)\right]\\
	    -&p_{i}(k-1)\\
	    &=\sum_{t=k-1}^{T-1}\Bigg[q_{i}x_{i}^{2}(k-1,t)+r_{i}u_{i}^{2}(k-1,t)\\
	    &+2q_{i}\left(\sum_{\tau=0}^{k-2}x_{i}(\tau,t)\right)x_{i}(k-1,t)\\
	    &+2r_{i}\left(\sum_{\tau=0}^{k-2}r_{i}(\tau,t)\right)r_{i}(k-1,t)\Bigg]-p_{i}(k-1).
	    \end{align*} 
	    It is straightforward to check that $J_{i}(k-1)$ is of the same form as $L_{k-1}$ and thus we conclude that truth-telling $\hat{x}_{i}(k-1,k-1)=x_{i}(k-1,k-1)$ is the unique dominant strategy for agent $i$ at time $k-1$.
\end{proof}
In the proof we have actually established the following stronger result:
\begin{cor}
In the stochastic VCG mechanism, truth-telling of states constitutes a subgame perfect dominant strategy equilibrium. 
\end{cor}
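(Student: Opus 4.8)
The plan is to read the backward-induction argument already used for Theorem \ref{stochasticvcg} as a stage-by-stage verification of the subgame-perfect-dominance property, so that the corollary needs only a change of vantage point rather than new work. Recall that a \emph{subgame perfect dominant strategy equilibrium} requires that, at every time $t$ and for every agent $i$, committing to truth-telling of $x_i(s,s)=w_i(s-1)$ for all $s\ge t$ maximizes $\mathbb{E}[\,\text{net utility over }\{t,\dots,T-1\}\mid\mathcal{H}_t\,]$, whatever the other agents do. First I would isolate the base case: the proof of Theorem \ref{stochasticvcg} shows at $t=T-1$ that $J_i(T-1)\ge\hat J_i(T-1)$, with equality only at $\hat x_i(T-1,T-1)=x_i(T-1,T-1)$, because the $h_i$ term and the already-fixed earlier-layer quantities $x_i(s,T-1),u_i(s,T-1)$ ($s\le T-2$) are constants in agent $i$'s current bid, and once these are discarded $J_i(T-1)$ is exactly the layer objective $L_{T-1}$, for which the ISO's allocation is the unique LQR optimum; this comparison holds for \emph{arbitrary} bids $\hat x_j(T-1,T-1)$ of the other agents, which is the ``irrespective of other agents'' clause at the last stage.

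Next I would transcribe the inductive step verbatim: assuming that truth-telling is the unique dominant strategy --- hence, by rationality, the actual play --- of every agent at times $k,\dots,T-1$, Lemma \ref{decoupling} gives $\mathbb{E}[X(k,t)\mid\mathcal{H}_{k-1}]=\mathbb{E}[U(k,t)\mid\mathcal{H}_{k-1}]=0$ for $t>k-1$, whence $\mathbb{E}[p_i(t)\mid\mathcal{H}_{k-1}]$ is constant in $\hat x_i(k-1,k-1)$ and the future-layer cross terms entering $\mathbb{E}[x_i^2(t)\mid\mathcal{H}_{k-1}]$ and $\mathbb{E}[u_i^2(t)\mid\mathcal{H}_{k-1}]$ collapse to constants; what remains of $\mathbb{E}[J_i(k-1)\mid\mathcal{H}_{k-1}]$, after dropping constants, is again of the form of $L_{k-1}$, a concave quadratic (using $Q\le 0$, $R<0$) whose unique maximizer over agent $i$'s bid is the truthful one, and again for any $\hat x_j(k-1,k-1)$. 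Since the mechanism's layered structure is the same at every stage, iterating this down to $t=0$ is precisely the statement of the corollary; I would close by noting that it is the \emph{uniqueness} of the LQR optimum that lets ``rationality'' propagate the truthful future play backward, so that the conditioning invoked through Lemma \ref{decoupling} at stage $k-1$ is legitimate.

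The step I expect to be the main obstacle is making fully precise that agent $i$'s time-$(k-1)$ bid leaves its $\mathcal{H}_{k-1}$-conditional expected net utility over layers $k,\dots,T-1$ unchanged. That bid does perturb the ISO's joint allocations $u_j^{*}(k-1,\cdot)$ for all $j$, and these reappear inside the later payments $p_i(t)$ through cross terms of the form $2r_j\bigl(\sum_{\tau=0}^{t-1}u_j(\tau,\cdot)\bigr)u_j^{*}(t,\cdot)$; the point to nail down is that the perturbed part of the cumulative-past sum is multiplied there by a future-layer allocation $u_j^{*}(t,\cdot)$ whose $\mathcal{H}_{k-1}$-conditional mean vanishes by Lemma \ref{decoupling} (under the inductively assumed future truth-telling), while the unperturbed part does not depend on $\hat x_i(k-1,k-1)$ at all; the analogous terms inside agent $i$'s own future contributions are handled identically, since $x_i(s,\cdot),u_i(s,\cdot)$ for $s\ge k$ have zero $\mathcal{H}_{k-1}$-conditional mean for the same reason. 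Everything else --- the reduction to $L_{k-1}$, the concavity, the uniqueness --- is bookkeeping that the proof of Theorem \ref{stochasticvcg} already carries out.
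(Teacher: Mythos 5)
Your proposal is correct and follows essentially the same route as the paper: the corollary is obtained by re-reading the backward-induction proof of Theorem \ref{stochasticvcg} stage by stage, noting that at each time $k-1$ the comparison $J_i(k-1)\ge\hat J_i(k-1)$ holds for arbitrary current bids of the other agents once future truth-telling is propagated backward via rationality and Lemma \ref{decoupling} (which is exactly what renders the future payment and cross terms constant in the current bid). The paper itself offers no separate argument beyond the remark that the theorem's proof already establishes this stronger statement, so your elaboration of the conditional-expectation bookkeeping is a faithful, slightly more explicit rendering of the same proof.
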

\subsection{LQG systems with unknown system parameters}
The case where the ISO does not know the system parameters, system dynamic equations, noise statistics, and utility functions, poses formidable difficulties and we conjecture that there is no mechanism with truth telling as a dominant strategy. Above, the key to proving incentive compatibility for the layered VCG mechanism lies in the fact that the optimal feedback gain $K(k)$ remains unchanged for each round of bids. This is due to the fact that $K(k)$ is only a function of $Q$, $R$, $A$, and $B$. Therefore, if bidding of system parameters at the beginning is allowed, then the layered VCG mechanism is not incentive compatible. We show this by the following counterexample.
\begin{exmp}\label{bidabqr}
Let $T=4$. The agents' system equations and cost matrices have the following parameters: $(a_{1},a_{2},a_{3},a_{4})=(1,1,1,1)$, $(b_{1},b_{2},b_{3},b_{4})=(1,1,1,1)$, $(q_{1},q_{2},q_{3},q_{4})=(-1,-1,-1,-1)$, $(r_{1},r_{2},r_{3},r_{4})=(-1,-1.1,-1.2,-1.1)$, $(\zeta_{1},\zeta_{2},\zeta_{3},\zeta_{4})=(0.3,0.32,0.31,0.3)$ and $(\sigma_{1},\sigma_{2},\sigma_{3},\sigma_{4})=(0.1,0.11,0.11,0.12)$. If system operator knows all the parameters of agents, and every agent bid its true state, then the expected net utility of agent $1$ (expected total utility minus expected total payment) is $0.629$. When agents are also allowed to bid their system parameters at the beginning, truth-telling of state may not be incentive compatible. Suppose that agents $2$, $3$, $4$ remain truthful, namely, bid their true system parameters at the beginning and their true states at all times. Suppose now that agent $1$ intentionally bids an untruthful $\hat{q}_{1}=-1.3$ while bidding other parameters truthfully at the beginning. Assume also that agent $1$ always bids its state as if there is no noise ($w_{1}(t)\equiv 0$). Now agent 1's net expected utility is $0.631$. Therefore, agent $1$'s optimal strategy is not to bid its true state when it is allowed to bid its system parameters at the beginning.$\hfill\square$
\end{exmp}

The assumption that the ISO knows the system parameters $A$, $B$, $Q$ and $R$ of the agents can perhaps be justified since the ISO can learn these parameters by running a VCG scheme for the day-ahead market, a dynamic deterministic market, where agents are guaranteed to bid their true system parameters as shown in the previous section, and  system parameters remain unchanged when agents participate
in the real-time stochastic market. 

\subsection{Budget Balance and Individual Rationality in LQG systems}
We extend the notion of scaling and the associated SVCG mechanism to the stochastic dynamic systems as follows. Consider the payments
\begin{align*}
&p_{i}(s):=c\cdot \sum_{j\ne i}\sum_{t=s}^{T-1}\bigg[q_{j}\hat{x}_{j}^{2}(s,t)+r_{j}u_{j}^{(i)2}(s,t)\\&+2q_{j}\left(\sum_{\tau=0}^{s-1}\hat{x}_{j}(\tau,t)\right) \hat{x}_{j}(s,t)+2r_{j}\left(\sum_{\tau=0}^{s-1}u_{j}^{(i)(\tau,t)}\right)u_{j}^{(i)}(s,t)\bigg]\\&-\sum_{j\ne i}\sum_{t=s}^{T-1}\bigg[q_{j}\hat{x}_{j}^{2}(s,t)+r_{j}u_{j}^{*2}(s,t)\\&+2q_{j}\left(\sum_{\tau=0}^{s-1}\hat{x}_{j}(\tau,t)\right)\hat{x}_{j}(s,t)+2r_{j}\left(\sum_{\tau=0}^{s-1}u_{j}(\tau,t)\right)u_{j}^{*}(s,t)\bigg],
\end{align*}
\noindent where $u_{j}^{(i)}(s,t)$ is the optimal solution to:
\begin{align*}
&\max \ \sum_{j\ne i}\sum_{t=s}^{T-1}\bigg[q_{j}x_{j}^{2}(s,t)+u_{j}^{2}(s,t)\\&+2q_{j}\left(\sum_{\tau=0}^{s-1}x_{j}(\tau,t)\right)x_{j}(s,t)+2r_{j}\left(\sum_{\tau=0}^{s-1}u_{j}(\tau,t)\right)u_{j}(s,t)\bigg]
\end{align*}
\noindent subject to
\begin{equation*}
x_{j}(s,t)=a_{j}x_{j}(s,t-1)+b_{j}u_{j}(s,t-1),\text{ for }s<t\le T-1,
\end{equation*}
\begin{equation*}
\sum_{j\ne i}u_{j}(s,t)=0,\text{ for }s\le t\le T-1,
\end{equation*}
\begin{equation*}
x_{j}(s,s)=\hat{x}_{j}(s,s).
\end{equation*}
As in the static case, based on its prior knowledge of a suitable range for $c$, the ISO can choose a range of $c$, which does not depend on the agents' bids, to achieve BB and IR.

Truth-telling is a dominant strategy under the SVCG mechanism because it falls under the Groves mechanism. Under the dominant strategy equilibrium, every agent $i$ will bid its true state $x_{i}(s,s)$, i.e., $w_{i}(s-1)$.
\begin{theorem}
Let $U^{*}(t)$ be the optimal solution to the following problem:
\begin{equation*}
\max \ \mathbb{E}\sum_{t=0}^{T-1}[X^{T}(t)QX(t)+U^{T}(t)RU(t)]
\end{equation*}
\noindent subject to
\begin{equation*}
X(t+1)=AX(t)+BU(t)+W(t),
\end{equation*}
\begin{equation*}
1^{T}U(t)=0,\forall t,
\end{equation*}
\begin{equation*}
X(0)\sim\mathcal{N}(0,Z), W\sim\mathcal{N}(0,\Sigma).
\end{equation*}
Let $X^{(i)}(t):=[x_{1}(t),...,x_{i-1}(t),x_{i+1}(t),...x_{N}(t)]^{T}$, and similarly let $Q^{(i)}$, $R^{(i)}$, $A^{(i)}$, $B^{(i)}$, $Z^{(i)}$ and $\Sigma^{(i)}$ be the matrix with the $i$-th component removed. Let $U^{(i)}(t)$ be the optimal solution to the following problem:
\begin{equation*}
\max \ \mathbb{E}\sum_{t=0}^{T-1}[X^{(i)T}(t)Q^{(i)}X^{(i)}(t)+U^{T}(t)R^{(i)}U(t)]
\end{equation*}
\noindent subject to
\begin{equation*}
X^{(i)}(t+1)=A^{(i)}X^{(i)}(t)+B^{(i)}U(t)+W^{(i)}(t),
\end{equation*}
\begin{equation*}
1^{T}U(t)=0,\forall t,
\end{equation*}
\begin{equation*}
X^{(i)}(0)\sim\mathcal{N}(0,Z^{(i)}), W^{(i)}\sim\mathcal{N}(0,\Sigma^{(i)}).
\end{equation*}
Let $H_{i}:=\mathbb{E}\sum_{t=0}^{T-1}[X^{(i)T}(t)Q^{(i)}X^{(i)}(t)+U^{(i)T}(t)R^{(i)}U^{(i)}(t)]$ and let $H_{max}:=\max_{i}H_{i}$. Let $F_{total}=\mathbb{E}\sum_{t=0}^{T-1}[X^{T}(t)QX(t)+U^{T}(t)RU(t)]$. If $F_{total}>0$, $H_{i}>0$ for all $i$, and MPB \eqref{Market power balance} condition holds, there exists an $\underline{c}$ and $\bar{c}$, with $\underline{c}\le\bar{c}$ such that if the constant $c$ is chosen in the
range $[\underline{c},\ \bar{c}]$, then the SVCG mechanism for the deterministic
dynamic system satisfies IC, EF, BB and IR at the same time.
\end{theorem}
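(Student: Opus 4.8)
The plan is to follow the proof of Theorem~\ref{c} essentially line for line, the only new work being (a) to check that a constant scaling factor keeps the \emph{stochastic} layered mechanism within the Groves class, and (b) to re-assemble the ``inner'' social welfares that appear in the layered payments into the quantities $H_i$, $H_{max}$ and $F_{total}$ named in the statement. First I would note that since $c$ is a \emph{constant}, the first (``$h_i$-type'') term of each stage payment $p_i(s)$ is a function only of the other agents' bids $\hat X_{-i}(\cdot,\cdot)$ and of the reduced-market allocation $u^{(i)}_j(\cdot,\cdot)$, so it does not depend on agent $i$'s own bid $\hat x_i(s,s)$; moreover the factor $c$ only rescales payments and leaves the ISO's allocation $U^{*}(s,t)$ untouched. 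Hence at every stage the SVCG mechanism is a Groves mechanism with the same allocation as the LQG ISO Mechanism, so Theorem~\ref{stochasticvcg} applies unchanged: truth-telling $\hat x_i(s,s)=w_i(s-1)$ is the unique (subgame-perfect) dominant strategy for each agent, and the induced allocation is social-welfare optimal. This yields IC and EF. From here on I work along the truth-telling equilibrium path, on which $\hat X(s,t)=X(s,t)$ and the ISO's assignment is the optimal $U^{*}(s,t)$.

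The next step is to evaluate the expected total payment. Along the equilibrium path, summing the first bracket of $p_i(s)$ over $s$ and $t$ gives, by the same pathwise algebra that produced \eqref{ls}, the layered reorganization of the realized social welfare $RSW^{(i)}$ of the $(N-1)$-agent market that excludes agent $i$; summing the starred bracket over $s$ and $t$ gives $\sum_{t}\sum_{j\ne i}\big(q_j x_j^2(t)+r_j u_j^2(t)\big)$, the contribution of all agents but $i$ to the realized social welfare of the full market. Taking expectations and applying Theorem~\ref{stochasticvcg} to the full market and to the market excluding agent $i$,
\[
\mathbb{E}\sum_{s=0}^{T-1}p_i(s)=c\,H_i-\Big(F_{total}-\mathbb{E}\sum_{t=0}^{T-1}F_i(x_i(t),u_i(t))\Big).
\]
Summing over $i$ and collapsing the double sum $\sum_i\sum_{j\ne i}$ exactly as in the static proof,
\[
\mathbb{E}\sum_{i=1}^{N}\sum_{s=0}^{T-1}p_i(s)=c\sum_{i}H_i-(N-1)F_{total},
\]
so ex-ante budget balance, $\mathbb{E}\sum_i\sum_s p_i(s)\ge 0$, is equivalent to $c\ge (N-1)F_{total}/\sum_i H_i$.

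For individual rationality I would compute agent $i$'s expected net utility along the equilibrium path, $\mathbb{E}\sum_t F_i(x_i(t),u_i(t))-\mathbb{E}\sum_s p_i(s)$. Substituting the displayed expression for $\mathbb{E}\sum_s p_i(s)$, the agent's own expected utility $\mathbb{E}\sum_t F_i(x_i(t),u_i(t))$ cancels and the expected net utility reduces to $F_{total}-c\,H_i$. Since $F_{total}>0$ and $H_i>0$, ex-ante IR for every agent is equivalent to $c\le F_{total}/H_i$ for all $i$, i.e.\ $c\le F_{total}/H_{max}$. Combining the two bounds, the SVCG mechanism is simultaneously IC, EF, BB and IR for every
\[
\underline{c}:=\frac{(N-1)F_{total}}{\sum_i H_i}\ \le\ c\ \le\ \frac{F_{total}}{H_{max}}=:\bar{c},
\]
and this interval is non-empty, i.e.\ $\underline{c}\le\bar{c}$, precisely when $(N-1)H_{max}\le\sum_i H_i$ (using $F_{total}>0$ and $H_i>0$ for all $i$), which is the MPB condition \eqref{Market power balance}.

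I expect the main obstacle to be the bookkeeping in the second paragraph: one must verify carefully that the quadratic cross-terms $2q_j\big(\sum_{\tau<s}\hat x_j(\tau,t)\big)\hat x_j(s,t)$ and $2r_j\big(\sum_{\tau<s}u_j(\tau,t)\big)u_j(s,t)$ in the layered payments re-sum \emph{pathwise}, before any expectation is taken, to the ordinary realized welfares $RSW^{(i)}$ and $\sum_{j\ne i}\sum_t F_j$, so that after taking expectations one may substitute the standard LQG optimal values $H_i$ and $F_{total}$ of the statement, and that it is the reduced-market allocations $u^{(i)}_j$, not the full-market ones, that enter the reduced layered sum. It is also worth stating explicitly that ``BB'' and ``IR'' are meant here in the ex-ante (expected) sense, consistent with $H_i$, $H_{max}$ and $F_{total}$ being defined as expectations; granting these identifications, the remainder of the argument is identical to the proof of Theorem~\ref{c}.
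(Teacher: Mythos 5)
Your proposal is correct and follows essentially the same route as the paper: reduce to the static argument of Theorem~\ref{c} by showing the expected total payment equals $c\,H_i$ minus the others' expected welfare in the full market, sum over $i$ for BB, cancel agent $i$'s own utility for IR, and read off the interval $[\underline{c},\bar{c}]$ whose non-emptiness is exactly the MPB condition. The only difference is that you spell out the pathwise re-summation of the layered cross-terms into $H_i$ and $F_{total}$, which the paper compresses into the one-line remark that the $w_i$ are i.i.d.\ and $u_i(t)$ is linear in $x_i(t)$.
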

\begin{proof}
It is straightforward that under a dominant strategy,
\begin{equation*}
\mathbb{E}\bigg[\sum_{s=0}^{T-1}p_{i}(s)\bigg]=c\cdot H_{i}-\mathbb{E}\left[\sum_{j\ne i}\sum_{t=0}^{T-1}\left(q_{j}x_{j}^{2}(t)+r_{j}u_{j}^{*2}(t)\right)\right]
\end{equation*}
\noindent since $w_{i}$'s are i.i.d. and $u_{i}(t)$ is linear in $x_{i}(t)$. Hence, to achieve budget balance, we need,
\begin{equation*}
\mathbb{E}\bigg[\sum_{i}\sum_{s=0}^{T-1}p_{i}(s)\bigg]=c\cdot \sum_{i}H_{i}-(N-1)F_{total}\ge 0.
\end{equation*}

To achieve individual rationality for agent $i$, we need
\begin{align*}
&\mathbb{E}\left[\sum_{t=0}^{T-1}\left(q_{i}x_{i}^{2}(t)+r_{i}u_{i}^{*2}(t)\right)-\sum_{s=0}^{T-1}p_{i}(s)\right]\\
=&F_{total}-c\cdot H_{i}\ge 0.
\end{align*}

Combining both inequalities, we have
\begin{equation*}
\frac{(N-1)F_{total}}{\sum_{i}H_{i}}\le c \le \frac{F_{total}}{H_{max}}.
\end{equation*}
Let $\underline{c}=\frac{(N-1)F_{total}}{\sum_{i}H_{i}}$ and $\bar{c}=\frac{F_{total}}{H_{max}}$. To ensure $\underline{c}\le \bar{c}$, one sufficient condition is,
\begin{equation*}
(N-1)H_{max}\le \sum_{i}H_{i},\ F_{total}>0,\ H_{i}>0\text{ for all }i.\qedhere
\end{equation*}
\end{proof}

\subsection{Lagrange Optimality in LQG Systems}\label{cn}
In general, just as for a static problem, the SVCG mechanism is not Lagrange optimal. Within the feasible range $[\underline{c},\bar{c}]$, one can choose a $c$ that achieves near Lagrange optimality. This can be formulated as a MinMax problem:
\begin{equation*}
\min_{c}\max_{i}\ \left|\frac{d_{i}(c)}{\mathbb{E}\sum_{t=0}^{T-1}\left[\lambda^{*}(t)u_{i}^{*}(t)\right]}\right|,\text{ subject to \eqref{cbound}},
\end{equation*} 
where 
\begin{align*}
&d_{i}(c):=\mathbb{E}\sum_{t=0}^{T-1}\left[\lambda^{*}(t)u_{i}^{*}(t)-p_{i}(t)\right]\\
&=\mathbb{E}\sum_{t=0}^{T-1}\left[\lambda^{*}(t)u_{i}^{*}(t)\right]-c\cdot H_{i}+\mathbb{E}\left[\sum_{t=0}^{T-1}\left(q_{j}x_{j}^{2}(t)+r_{j}u_{j}^{*2}(t)\right)\right].
\end{align*} 
\begin{exmp}
Consider the same system parameters as in Example \ref{bidabqr}. The optimal solution to the MinMax problem is $(c^{*},Z^{*})=(0.96,0.21)$. Thus, by choosing $c=0.96$, the SVCG mechanism satisfies IC, EF, BB and IR, and all agents expect to pay/receive within $21\%$ of their expected Lagrange optimal payments.
\end{exmp}

Just as for deterministic systems, as the number of agents increases, the scaled-VCG mechanism does achieve asymptotic Lagrange Optimality. 
\begin{theorem}
If $(a_{k}, b_{k}, p_{k}, q_{k}, \zeta_{k}, \sigma_{k})$ for $1 \leq k \leq N$ satisfy the following:
\begin{enumerate}
\item $\underline{a}\le |a_{i}|\le \bar{a}$, $\underline{b}\le |b_{i}|\le \bar{b}$, $\underline{q}\le q_{i}\le \bar{q}<0$ and $\underline{r}\le r_{i}\le \bar{r}<0$

\item $F_{total}>0$, $H_{i}>0$ for all $i$, and MPB condition holds,

\end{enumerate}
then the following holds:
\begin{enumerate}
\item There is a range within which $c^{N}$ can be chosen to achieve BB and IR, and $\lim_{N\to\infty}c^{N}=1$,

\item Asymptotic Lagrange Optimality: $\lim_{N\to\infty}\mathbb{E}\sum_{t=0}^{T-1}\left[\lambda^{N}(t)u_{i}^{N}(t)-p_{i}^{N}(t)\right]=0$, where the random variable $\lambda^{N}(t)$ is the Lagrange multiplier corresponding to the power balance constraint.
\end{enumerate}
\end{theorem}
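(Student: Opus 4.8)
The plan is to reduce the stochastic LQG problem to the deterministic linear--quadratic setting already settled in Theorem~\ref{theo3}, by combining the layered decomposition \eqref{xdecompose}--\eqref{udecompose} with the intertemporal decoupling of Lemma~\ref{decoupling}, and then to re-run the asymptotic estimates of Theorems~\ref{theo2} and~\ref{theo3} one layer at a time. The only genuinely new ingredient is that each Gaussian noise source enters the expected social welfare and the expected payments only through a \emph{trace} of a Riccati-type matrix against the (bounded) noise covariance, so the $N$-dependence of every relevant quantity is inherited verbatim from the deterministic case.

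First I would build the deterministic equivalent. Substituting $X(t)=\sum_{s\le t}X(s,t)$ and $U(t)=\sum_{s\le t}U(s,t)$ into the random social welfare and taking expectations, all cross terms $\mathbb{E}[X^{T}(s,t)QX(s',t)]$ and $\mathbb{E}[U^{T}(s,t)RU(s',t)]$ with $s\neq s'$ vanish, since by \eqref{projection} and linearity $X(s,t),U(s,t)$ are deterministic linear images of the single noise $X(s,s)$ (equal to $W(s-1)$ for $s\ge1$ and $X(0)$ for $s=0$), and distinct noises are independent with zero mean; this is exactly the content of Lemma~\ref{decoupling}. Hence $SW=\sum_{s=0}^{T-1}\mathbb{E}[\tilde L_{s}]$ with $\tilde L_{s}:=\sum_{t=s}^{T-1}[X^{T}(s,t)QX(s,t)+U^{T}(s,t)RU(s,t)]$, and the ISO's problem separates into $T$ layerwise balancing-constrained LQ problems, each of the algebraic form of the augmented problem \eqref{d1} in the proof of Theorem~\ref{theo3}, but with a random Gaussian ``initial condition'' $X(s,s)$. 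By certainty equivalence for the balancing-constrained LQ problem --- the optimal gain $K(t)$ depends only on $A,B,Q,R$, as established in the proof of Theorem~\ref{stochasticvcg} --- the optimal expected cost of layer $s$ equals $\mathrm{trace}(\Pi_{N}(s)\Sigma)$ for $s\ge1$ and $\mathrm{trace}(\Pi_{N}(0)Z)$ for $s=0$, where $\Pi_{N}(s)$ is the constrained Riccati matrix of the $N$-agent layer problem over the remaining horizon. Thus $F_{total}^{N}=\mathrm{trace}(\Pi_{N}(0)Z)+\sum_{s=1}^{T-1}\mathrm{trace}(\Pi_{N}(s)\Sigma)$, and $H_{i}^{N}$ is the same expression for the $(N-1)$-agent system with agent $i$ removed (using $Z^{(i)},\Sigma^{(i)}$ and $\Pi_{N}^{(i)}(s)$).

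For part~(1), I would reuse the estimates of Theorem~\ref{theo3} on each layer. Eliminating the layer dynamics puts each layer problem in the form \eqref{m1} with a block-diagonal weight $W=\mathrm{diag}(W_{1},\dots,W_{N})$ whose blocks are uniformly bounded in $N$ (this uses hypothesis~(1), together with the uniform bounds $\underline\zeta\le\zeta_{k}\le\bar\zeta$, $\underline\sigma\le\sigma_{k}\le\bar\sigma$ that the statement implicitly intends when it lists $\zeta_k,\sigma_k$), so the ``price gain'' $\Gamma=(\sum_{k}W_{k}^{-1})^{-1}$ is $\Theta(1/N)$ by the bounded inverse theorem and changes by only $\Theta(1/N^{2})$ when one block is deleted. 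This yields $[\Pi_{N}(s)]_{ii}=\Theta(1)$ and $[\Pi_{N}(s)]_{kk}-[\Pi_{N}^{(i)}(s)]_{kk}=O(1/N^{2})$ for $k\ne i$, hence $F_{total}^{N}=\Theta(N)$, $H_{i}^{N}=\Theta(N)$ and $F_{total}^{N}-H_{i}^{N}=\Theta(1)+N\cdot O(1/N^{2})=O(1)$; therefore $\bar c^{N}=F_{total}^{N}/H_{max}^{N}=1+O(1/N)$, and sandwiching $\underline c^{N}$ between $\tfrac{(N-1)F_{total}^{N}}{NH_{max}^{N}}$ and $\tfrac{(N-1)F_{total}^{N}}{NH_{min}^{N}}$ exactly as in Theorem~\ref{theo3} gives $\underline c^{N}\to1$, so every admissible $c^{N}\to1$. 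For part~(2), decompose $\lambda^{N}(t)=\sum_{s\le t}\lambda^{N}(s,t)$ into the multipliers of the layer balance constraints $1^{T}U(s,t)=0$ (the layerwise KKT systems sum to the aggregate KKT system) and $u_{i}^{N}(t)=\sum_{s\le t}u_{i}^{N}(s,t)$; Lemma~\ref{decoupling} kills the cross terms in expectation, so $\mathbb{E}\sum_{t}\lambda^{N}(t)u_{i}^{N}(t)-\mathbb{E}\sum_{s}p_{i}^{N}(s)$ splits over the $T$ layers, and each layer's term equals the expectation of the quantity $\tfrac14\,\Xi^{T}(\Gamma^{T}W_{1}^{-1}\Gamma+\Gamma-\Gamma_{-1})\Xi$ from the end of the proof of Theorem~\ref{theo3}, now with the deterministic field $\Xi$ replaced by a linear image of the layer noise, so that the quadratic form becomes a trace against the bounded covariance $\Sigma$ (or $Z$). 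Since $\Gamma^{T}W_{1}^{-1}\Gamma+\Gamma-\Gamma_{-1}=O(1/N^{3})$, each layer contributes $O(1/N)$, and summing over the finitely many layers yields $O(1/N)\to0$.

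The step I expect to be the main obstacle is the claim, used pervasively in parts~(1) and~(2), that the \emph{constrained} Riccati matrix $\Pi_{N}(s)$ has $\Theta(1)$ diagonal entries and $O(1/N^{2})$ sensitivity to the deletion of a single agent, uniformly in $N$. Unlike the one-shot problems of Theorems~\ref{theo2} and~\ref{theo3}, $\Pi_{N}(s)$ is produced by up to $T$ backward Riccati steps, in each of which the balancing constraint recouples all $N$ agents, so one must control how the $\Theta(1/N)$ price gain and its $\Theta(1/N^{2})$ perturbation propagate through these compositions --- essentially a uniform-in-$N$ bounded-inverse/contraction estimate for the constrained Riccati map. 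One must also verify carefully that the aggregate multiplier $\lambda^{N}(t)$ genuinely decomposes as the sum of the layerwise multipliers $\lambda^{N}(s,t)$. Once these two points are secured, the remainder is a layerwise rerun of the deterministic arguments already carried out in Theorems~\ref{theo2} and~\ref{theo3}.
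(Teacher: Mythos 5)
Your overall strategy is the same as the paper's: decompose the stochastic problem into $T$ layerwise deterministic balancing-constrained LQ problems via \eqref{xdecompose}--\eqref{udecompose}, use the decoupling of Lemma~\ref{decoupling} to kill cross-layer terms in expectation, apply the estimates of Theorem~\ref{theo3} to each layer, and sandwich $\underline{c}^{N}$ as before. Two remarks. First, the obstacle you single out --- propagating $O(1/N^{2})$ perturbation bounds through $T$ backward steps of a \emph{constrained Riccati recursion} --- does not actually arise in the paper's route: each layer is handled through the open-loop reformulation \eqref{d1}, i.e., a single static QP over the whole remaining horizon with block-diagonal $W$, so the perturbation analysis is done once on $\Gamma=(Y^{T}W^{-1}Y)^{-1}$ exactly as in Theorem~\ref{theo3}; your trace-against-covariance bookkeeping then supplies the expectation over the Gaussian layer ``initial condition'' $X(s,s)$, a step the paper leaves implicit. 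Second, the one claim in your write-up that is false as stated is $\lambda^{N}(t)=\sum_{s\le t}\lambda^{N}(s,t)$: with $\lambda^{N}(s,t)$ the multiplier of the layer-$s$ constraint $1^{T}U(s,t)=0$, one has $\lambda^{N}(s,t)=\Phi_{t}\sum_{\tau=0}^{s}X(\tau,t)$, so the layerwise multipliers are nested partial sums and the aggregate multiplier is the \emph{last} one, $\lambda^{N}(t)=\lambda^{N}(t,t)$, not their sum (summing them would overcount $X(\tau,t)$ with multiplicity $t-\tau+1$). The identity you actually need,
\begin{equation*}
\mathbb{E}\bigl[\lambda^{N}(t)\,u_{i}^{*N}(t)\bigr]=\sum_{s=0}^{t}\mathbb{E}\bigl[\lambda^{N}(s,t)\,u_{i}^{*N}(s,t)\bigr],
\end{equation*}
still holds, but it follows from writing $\lambda^{N}(t)=\lambda^{N}(s,t)+\Phi_{t}\sum_{\tau=s+1}^{t}X(\tau,t)$ against each $u_{i}^{*N}(s,t)$ and using $\mathbb{E}[X(\tau,t)u_{i}^{*N}(s,t)]=0$ for $\tau\ge s+1$ (independence and zero mean of future noises), which is how the paper proceeds. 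With that substitution your argument closes; the rest is a faithful layerwise rerun of Theorem~\ref{theo3}.
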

\begin{proof}
At each layer, the ISO is solving a deterministic LQR problem, and from Theorem \ref{theo3} we have,
\begin{align*}
\lim_{N\to\infty}\frac{L_{s}^{*}}{H_{s,1}}=1, \text{ for }0\le s\le T-1,
\end{align*}
where $L_{s}^{*}$ is the maximum value of $L_{s}$, and $H_{s,1}$ is the maximum social welfare when agent $1$ is excluded. Moreover, as we have shown in Theorem \ref{stochasticvcg}, the sum of $U^{*}(s,t)$ calculated at each layer is indeed the optimal solution $U^{*}(t)=\sum_{s=0}^{t}U^{*}(s,t)$.
Consequently,
\begin{align*}
\lim_{N\to\infty}\frac{F_{total}^{N}}{H_{1}^{N}}=\lim_{N\to\infty}\frac{\mathbb{E}\sum_{s=0}^{T-1}L_{s}^{*}}{\mathbb{E}\sum_{s=0}^{T-1}H_{s,1}}=1.
\end{align*}
Similarly we can show that $\lim_{N\to\infty}\frac{F_{total}^{N}}{H_{k}^{N}}=1$ for $k\ne 1$. Therefore, $\lim_{N\to\infty}\bar{c}^{N}=1$. Let $H_{min}=\min H_{i}$. Since $\frac{(N-1)F_{total}^{N}}{NH_{max}}\le \underline{c}^{N}\le \frac{(N-1)F_{total}^{N}}{NH_{min}}$, $\lim_{N\to\infty}\underline{c}^{N}=1$. Hence, $\displaystyle\lim_{N\to\infty}c^{N}=1.$
We next show that the total expected VCG payment converges to the total expected Lagrange payment when $N$ goes to infinity. To calculate $\lambda^{N}(t)$, we solve the following one-step problem to determine its Lagrange multiplier:  
\begin{equation*}
\max_{U(t)}\ X^{T}(t)QX(t)+U^{T}(t)RU(t)+\mathbb{E}\left[X^{T}(t+1)P_{t+1}X(t+1)\right]
\end{equation*}
\noindent subject to
\begin{equation}\label{B}
1^{T}U(t)=0.
\end{equation}
\noindent where $P_{t}$ is the Ricatti matrix of the unconstrained problem where balance constraint $1^{T}U(t)=0$, or $u_{1}(t)=-\sum_{i=2}^{N}u_{i}(t)$ is substituted in both the objective and the state equation. 

The Lagrangian is,
\begin{align*}
\mathcal{L}=&X^{T}(t)QX(t)+U^{T}(t)RU(t)\\
&+\mathbb{E}\left[X^{T}(t+1)P_{t+1}X(t+1)\right]-\lambda^{N}(t)1^{T}U(t)
\end{align*}
Take partial derivative with respect to $U(t)$ and $\lambda(t)$, we have
\begin{align*}
&\frac{\partial\mathcal{L}}{\partial U(t)}\\
=&2RU(t)+2B^{T}P_{t+1}BU(t)+2B^{T}P_{t+1}AX(t)-\lambda^{N}(t)1=0.
\end{align*}
The Lagrange multiplier $\lambda^{N}(t)$ calculated from \eqref{B} is:
\begin{multline}\label{stochasticlambda}
\lambda^{N}(t)=2\left[1^{T}(R+B^{T}P_{t+1}B)^{-1}1 \right]^{-1}\cdot\\
1^{T}(R+B^{T}P_{t+1}B)^{-1}B^{T}P_{t+1}AX(t):=\Phi_{t}X(t).
\end{multline}
At time $s$, we denote $\lambda^{N}(s,t)$ as the Lagrange multipliers associated with the balance constraint $1^{T}U(s,t)=0$. From Theorem \ref{theo3}, we have
\begin{align*}
\lim_{N\to\infty}\left[\left(\sum_{t=s}^{T-1}\lambda^{N}(s,t)u_{i}^{*N}(s,t)\right)-p_{i}^{N}(s)\right]=0.
\end{align*}
Summing over $s$, we have
\begin{align*}
&\lim_{N\to\infty}\sum_{s=0}^{T-1}\left[\left(\sum_{t=s}^{T-1}\lambda^{N}(s,t)u_{i}^{*N}(s,t)\right)-p_{i}^{N}(s)\right]\\
=&\lim_{N\to\infty}\sum_{t=0}^{T-1}\left[\left(\sum_{s=0}^{t}\lambda^{N}(s,t)u_{i}^{*N}(s,t)\right)-p_{i}^{N}(t)\right]=0.
\end{align*}
From \eqref{completionofsquare}, we have, at time $s$, $\displaystyle\lambda^{N}(s,t)=\Phi_{t}\sum_{\tau=0}^{s}X(\tau,t),$ and at time $s-1$, $\displaystyle\lambda^{N}(s-1,t)=\Phi_{t}\sum_{\tau=0}^{s-1}X(\tau,t).$
Therefore, $\lambda^{N}(s,t)=\lambda^{N}(s-1,t)+\Phi_{t}X(s,t).$

The Lagrange multiplier $\lambda^{N}(t)$ associated with the balance constraint $1^{T}U(t)=0$ can be calculated as:
\begin{align*}
\lambda^{N}(t)=\Phi_{t}X(t)=\Phi_{t}\sum_{s=0}^{t}X(s,t)=\lambda^{N}(t,t).
\end{align*}
As a result,
\begin{align*}
&\lambda^{N}(t)u_{i}^{*N}(t)=\lambda^{N}(t)\sum_{s=0}^{t}u_{i}^{*N}(s,t)\\
&=\sum_{s=0}^{t}\left[\left(\lambda^{N}(s,t)+\Phi_{t}\sum_{\tau=s+1}^{t}X(\tau,t)\right)u_{i}^{*N}(s,t)\right]
\end{align*}
Because $X(0)$ is independent of $W(t)$ and $W(t)$ are i.i.d., 
\begin{align*}
\mathbb{E}[X(\tau,t)u_{i}^{*N}(s,t)]=0, \text{ for }\tau\ge s+1.
\end{align*}
Therefore,
\begin{align*}
&\lim_{N\to\infty}\mathbb{E}\sum_{t=0}^{T-1}\left[\lambda^{N}(t)u_{i}^{*N}(t)-p^{N}_{i}(t)\right]\\
&=\lim_{N\to\infty}\mathbb{E}\sum_{t=0}^{T-1}\left[\left(\sum_{s=0}^{t}\lambda^{N}(s,t)u_{i}^{*N}(s,t)\right)-p_{i}^{N}(t)\right]=0.
\end{align*}
\end{proof}
\subsection{Numerical Example}
In this section, we provide a numerical example of the proposed Scaled-VCG scheme. For power generators, $u_{i}(t)$ is the power generated at time $t$. The state equation is:
\begin{equation*}
x_{i}(t+1)=x_{i}(t)+u_{i}(t)+w_{i}(t),
\end{equation*}
where $w_{i}(t)$ is the noise. The utility function is defined as the negative cost: 
\begin{equation*}
F_{i}(u_{i}(t))=r_{i}u_{i}^{2}(t)+v_{i}u_{i}(t),
\end{equation*}
where $r_{i}<0$ and $v_{i}>0$ are constants.
For power consumers, we adopt the virtual battery model \cite{7525121} to represent aggregate load, and let $x(t)$ and $u(t)$ denote the state of charge (SoC) and power consumption at time $t$, respectively. The state equation is: 
\begin{equation*}
x_{i}(t+1)=a_{i}x_{i}(t)+b_{i}u_{i}(t)+\eta_{i}h_{i}(t)+w_{i}(t),
\end{equation*}
where $a_{i}$, $b_{i}$ and $\eta_{i}$ are constants, $h_{i}(t)$ represents ambient temperature, and $w_{i}(t)$ is the noise. The utility function is:
\begin{equation*}
F_{i}(x_{i}(t),u_{i}(t))=q_{i}(x_{i}(t)-\bar{x}_{i})^2,
\end{equation*} 
where $q_{i}<0$ is a constant number, and $\bar{x}_{i}$ is the desired SoC.

Let $T=4$. The parameters are drawn from the following uniform distributions: $r_{i}\sim\mathcal{U}(-0.02,-0.01)$, $v_{i}\sim\mathcal{U}(40,50)$, $a_{i}\sim\mathcal{U}(0.99,1.01)$, $b_{i}\sim\mathcal{U}(0.98,1.02)$, $\eta_{i}\sim\mathcal{U}(3.5,4)$, $q_{i}\sim\mathcal{U}(-1.01,-0.99)$. $w_{i}(t)$ are drawn from Gaussian distribution $\mathcal{N}(0,0.2)$. For generators, $x_{i}(0)=0$; for consumers, $x_{i}(t)\sim\mathcal{N}(75,1)$ and $h_{i}=[75,80,90,75]^{T}, \forall i$. We first investigate the issue of incentive compatibility in a 4-agent system with agents 1 and 2 as generators, and agents 3 and 4 as loads. We assume that agents 2, 3, and 4 always bid the true states while agent 1 may adopt different strategies when bidding its state. We compare the expected net utility for agent 1 when the following 3 strategies are adopted:
\begin{enumerate}
\item Always bids the state $\hat{x}_{i}(s,s)=w_{i}(s-1)-0.1$.
\item Always bids the true state $\hat{x}_{i}(s,s)=w_{i}(s-1)$.
\item Always bids the state $\hat{x}_{i}(s,s)=w_{i}(s-1)+0.1$.
\end{enumerate}
The results are summarized in Table \ref{coms}. It can be seen that truth-telling of state results in a higher expected net utility compared to other non-truth-telling strategies. 
\begin{table}
\normalsize
\centering
\begin{tabular}{|c|c|c|c|}
\hline
Strategy & 1 & 2 & 3 \\
\hline
Expected net utility & 25.4 & 27.8 & 25.2\\
\hline
\end{tabular}
\caption{Comparison of expected net utility when adopting different strategies}
\label{coms}
\end{table}

We next investigate the asymptotic performance of the SVCG mechanism. For each $N$, $c^{N}$ is chosen as the optimal solution to the MinMax problem as shown in Section \ref{cn}. Denote $d^{N}=\mathbb{E}\sum_{t=0}^{T-1}\left[\lambda^{N}(t)u_{i}^{N}(t)-p_{i}^{N}(t)\right]$. Results are shown in Fig. \ref{numcn}. It is seen that as $N$ increases, $c^{N}$ and $d^{N}$ converge to $1$ and $0$ respectively.
\begin{figure}[t]
	\centering
	\captionsetup{justification=centering}	
	\includegraphics[width=\linewidth]{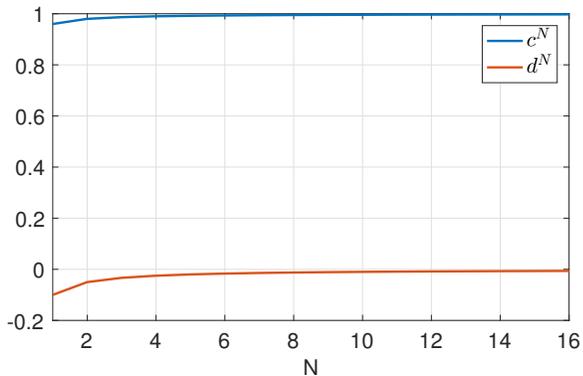}
	\caption{Convergence of $c^{N}$ and $d^{N}$}
	\label{numcn}
\end{figure}
\subsection{LQG systems with partially observed states}
The above results can be extended in a straightforward manner to the partially observed case where a linear transformation of the state is observed under additive white Gaussian noise. This can be done by considering the hyperstate, the conditional distribution of the state, which evolves as a linear system driven by the innovation process. In this case, the state noise corresponds to the innovation process, which is effectively what the agents are being asked to bid.
\section{Linear Non-Gaussian Agents with Quadratic Costs}\label{NG}
Consider now the case where the agents are linear and still have quadratic costs, but their state noises, while independent, are not Gaussian.
\begin{defn}
We say that a mechanism achieves ``linear efficiency" if it attains the best cost that can be obtained through linear feedback.
\end{defn}

The mechanism for the LQG case also achieves linear efficiency when the noises are zero mean and white, but non-Gaussian. This result rests on the fact that the mechanism achieves intertemporal decoupling.
\begin{theorem}
The SVCG mechanism can ensure incentive compatibility and linear efficiency.
\end{theorem}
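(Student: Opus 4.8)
The plan is to reuse, essentially verbatim, the machinery built for the LQG case, observing at each step that Gaussianity was never actually used beyond zero mean and whiteness of the noise and beyond the linearity of the control laws that the mechanism generates. The mechanism itself is unchanged: at each stage $s$ the ISO asks agent $i$ to bid $x_i(s,s)$, i.e.\ its realized noise $w_i(s-1)$, solves the deterministic layer problem $\max L_s$ over the zero-noise recursion \eqref{projection}, and collects the scaled Groves payment $p_i(s)$. There are two claims to establish: (i) incentive compatibility, i.e.\ truth-telling of $x_i(s,s)$ is the unique dominant strategy for each agent; and (ii) linear efficiency, i.e.\ the resulting closed-loop control minimizes $\mathbb{E}\,RSW$ over all \emph{linear} feedback laws subject to the balance constraint \eqref{ubalance}.

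For linear efficiency I would argue as in the proof of Theorem~\ref{stochasticvcg}. After completing the square, the $k$-th layer problem \eqref{completionofsquare} is a deterministic, balance-constrained LQR problem whose optimizer is linear feedback with gain $K(k)$, and $K(k)$ is a function of $A,B,Q,R$ only --- it never sees the noise distribution. Superposing the layered allocations gives $U(t)=\sum_{s=0}^{t}U^{*}(s,t)=K(t)X(t)$, exactly the LQR feedback law. It then remains to invoke the standard fact that for a linear system with quadratic cost and zero-mean white (not necessarily Gaussian) driving noise, this LQR feedback law is optimal \emph{within the class of linear feedback laws}: for any fixed linear causal law, all cross terms of the form $\mathbb{E}[w_\tau^{T}(\cdot)]$ with $(\cdot)$ a linear functional of $\{x_0,\dots,x_\tau\}$ vanish, so the control-dependent part of the expected cost is exactly the deterministic LQR objective and is minimized by the same Riccati recursion that defines $K(t)$; the remaining terms are noise-variance contributions independent of the control. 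Hence the mechanism attains the best achievable linear-feedback cost, which is precisely linear efficiency, and this delivery is automatic because each layer's optimizer is a linear law by construction.

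For incentive compatibility, note first that $p_i(s)$ is a Groves payment whose first term is inflated by the \emph{constant} $c$; since $c$ is constant and the inflated term still does not depend on agent $i$'s own bids, the mechanism remains within the Groves class. The only way dominant-strategy truth-telling could fail is through intertemporal coupling of a current bid to future net utility, and this is ruled out by the decoupling Lemma~\ref{decoupling}, which continues to hold: $\mathbb{E}[U^{*}(k,k)\mid\mathcal{H}_{k-1}]=K(k)\,\mathbb{E}[X(k,k)\mid\mathcal{H}_{k-1}]=K(k)\,\mathbb{E}[W(k-1)]=0$, using only zero mean of the noise and linearity of $K(k)$, and then $\mathbb{E}[X(k,t)\mid\mathcal{H}_{k-1}]=0$ and $\mathbb{E}[U(k,t)\mid\mathcal{H}_{k-1}]=0$ for $k<t\le T-1$ by linearity of \eqref{projection}. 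With these identities the backward induction of Theorem~\ref{stochasticvcg} goes through unchanged: after dropping the (conditionally constant) terms $\mathbb{E}[p_i(t)\mid\mathcal{H}_{k-1}]$ and the fixed variance contributions of future layers, $J_i(k-1)$ collapses to the same quadratic form in $(x_i(k-1,k-1),u_i(k-1,k-1))$ as the agent-$i$ part of $L_{k-1}$, so bidding $\hat{x}_i(k-1,k-1)=x_i(k-1,k-1)$ is optimal, and strict concavity of the LQR subproblem makes it the unique dominant strategy.

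The main obstacle is bookkeeping rather than a new idea: one must certify that nowhere in the LQG argument was the Gaussian density used beyond first- and second-moment properties. The delicate places to recheck are (a) the reduction $\mathbb{E}[x_i^{2}(t)\mid\mathcal{H}_{k-1}]=x_i^{2}(k-1,t)+2x_i(k-1,t)\sum_{s=0}^{k-2}x_i(s,t)+(\text{const.})$, whose leftover is just the variance of $\sum_{\tau=k}^{t}x_i(\tau,t)$ and needs only whiteness and zero mean; (b) the claim $\mathbb{E}[p_i(t)\mid\mathcal{H}_{k-1}]=\text{const.}$, which reduces to $\mathbb{E}[x_j(t,\tau)\mid\mathcal{H}_{k-1}]=0$ and $\mathbb{E}[u_j(t,\tau)\mid\mathcal{H}_{k-1}]=0$; and (c) the vanishing of the cross-covariances between an agent's manipulated ``consistent'' state and the future fresh noises under a unilateral deviation, again a consequence of independence and zero mean. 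It should also be stated explicitly that, unlike the LQG case, full (unrestricted) efficiency is \emph{not} claimed: a nonlinear feedback law could in principle exploit higher moments of the non-Gaussian noise, so ``efficient'' is necessarily weakened to ``linearly efficient,'' and the layered mechanism delivers exactly this.
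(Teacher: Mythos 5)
Your proposal is correct and takes essentially the same route as the paper: the paper's entire proof is a one-sentence remark that the argument of Theorem~\ref{stochasticvcg} carries over because the intertemporal decoupling of Lemma~\ref{decoupling} rests only on the linearity of the control law (hence only on zero mean and whiteness of the noise), which is precisely the observation you make and then verify step by step. Your write-up is far more explicit than the paper's --- in particular your separate justification of linear efficiency via the vanishing cross terms $\mathbb{E}[w_\tau^{T}(\cdot)]$, and your checklist of where Gaussianity might have been used, are details the paper leaves entirely implicit.
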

\begin{proof}
The proof follows the the proof of Theorem \ref{stochasticvcg} for LQG since the crucial idea there is to obtain uncorrelatedness as in Lemma \ref{decoupling}, which follows from the linear control law.
\end{proof}
In general, however, nonlinear strategies may achieve even lower cost, but the SVCG mechanism can only guarantee the optimal cost in the class of linear feedback laws.
\section{Concluding Remarks}\label{CRCDC}

It remains an open problem if it is possible to construct a mechanism that ensures the dominance of dynamic truth-telling for agents comprised of general stochastic dynamic systems. We conjecture that it is not feasible in general. A careful construction of a sequence of layered VCG payments over time shows that for the special case of rational LQG agents with known
system parameters, the intertemporal effect of current bids on future payoffs can be decoupled, and truth-telling of dynamic states is guaranteed. It achieves subgame perfect dominance of truth telling strategies and social welfare optimality. A modification of the VCG payments, called Scaled-VCG, achieves Budget Balance and Individual Rationality for a range of scaling factors, under a certain Market Power Balance condition. This condition provides economic justification for Load Serving entities or Load Aggregators that group small consumers as a means for achieving social welfare optimality. If the noises are not Gaussian, then the mechanism achieves optimal social welfare in the class of linear strategies. In the asymptotic regime of increasing population of agents, the Scaled-VCG payments converge to the Lagrange payments, the payments that the agents would make in the absence of strategic considerations. It is of interest to determine the viability of the ISO identifying a range of the scaling constant that assures budget balance and individual rationality. It is also of interest to design incentive mechanisms or identify conditions under which there is no strategic play on the scaling
constants. and to extend the current layered mechanism to the case where the ISO does
not know the system parameters.

\section*{ACKNOWLEDGMENT}
The authors would like to thank Le Xie and Vijay Subramanian for their valuable comments and suggestions.

\bibliographystyle{IEEEtran}
\bibliography{myReference}

\begin{IEEEbiography}[{\includegraphics[width=1in,height=1.25in,clip,keepaspectratio]{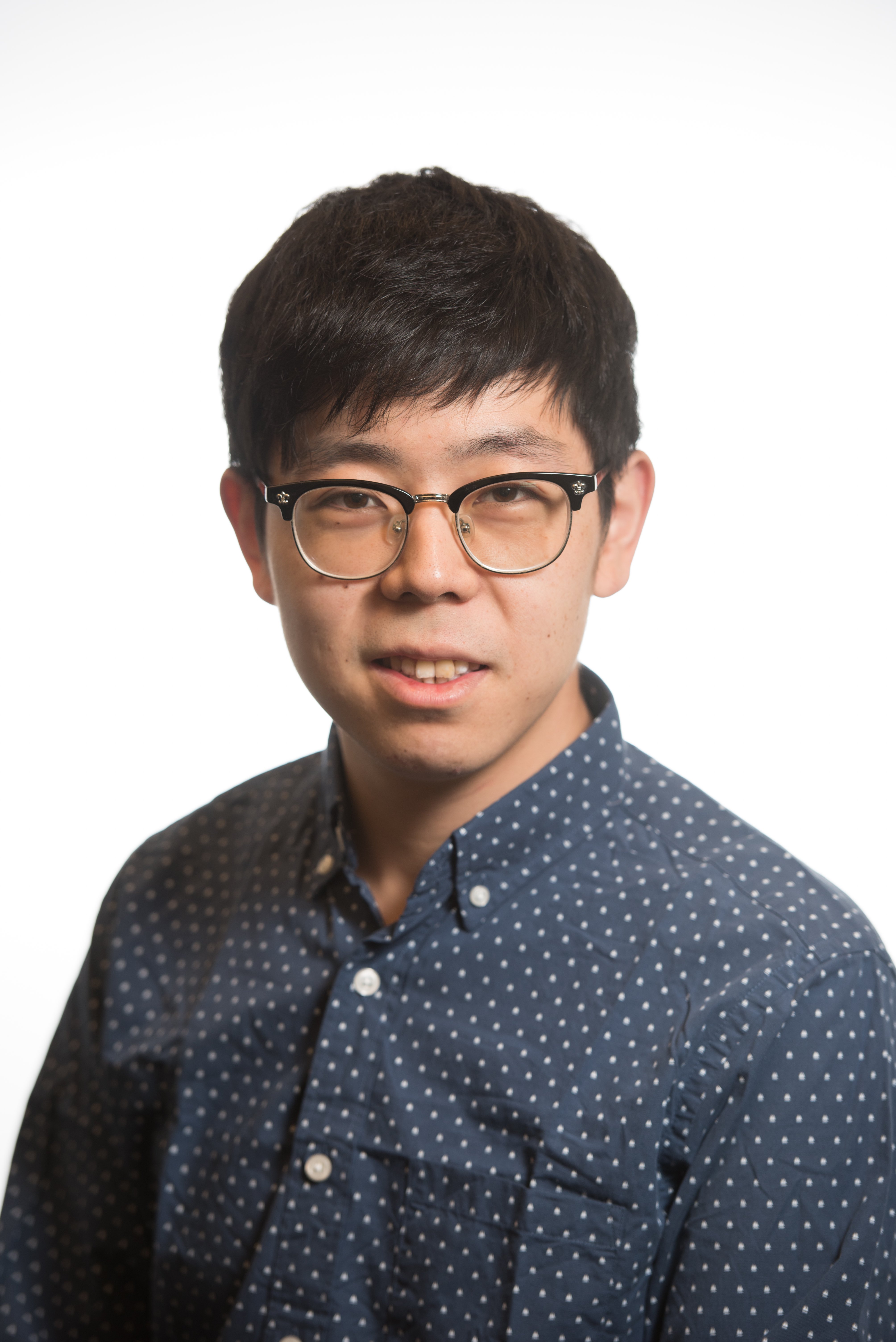}}]{Ke Ma}
received the B.E. degree in automation from Tsinghua University, Beijing, China, in 2012, and the Ph.D. degree in electrical and computer engineering from the Department of Electrical and Computer Engineering, Texas A\&M University, College Station, TX, USA in 2018. 

He is currently an electrical engineer at the Optimization and Control Group, Pacific Northwest National Laboratory (PNNL), Richland, WA, USA. His research interests include dynamic mechanism design and its application in electricity market, and market-based (transactive) coordination and control of distributed energy resources (DERs).
\end{IEEEbiography}
\begin{IEEEbiography}[{\includegraphics[width=1in,height=1.25in,clip,keepaspectratio]{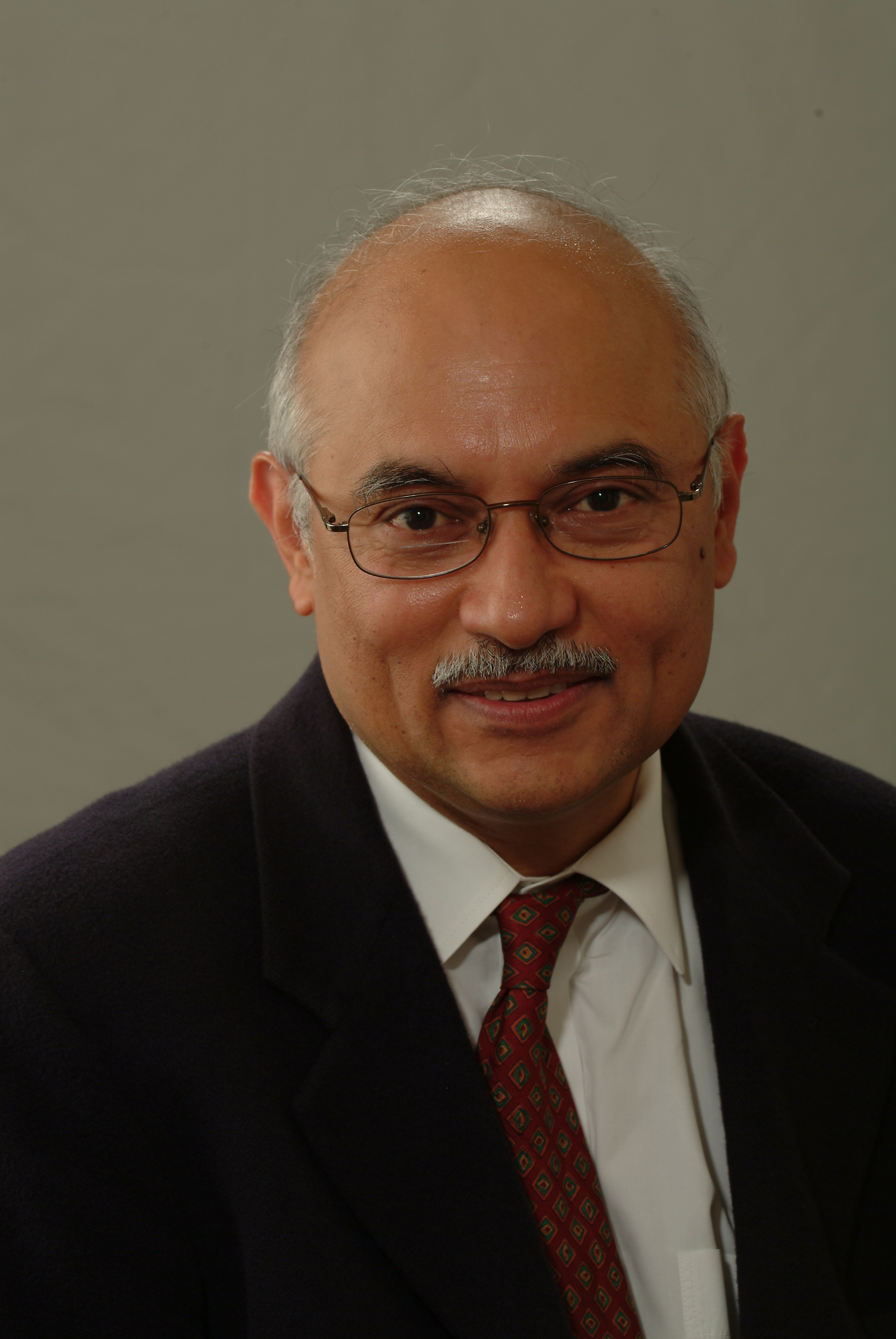}}]{P.~R.~Kumar} B. Tech. (IIT Madras, `73), 
D.Sc. (Washington University, St.~Louis, `77), was a faculty member at UMBC (1977-84) and Univ.~of Illinois, Urbana-Champaign (1985-2011). He is currently at Texas A\&M University. His current research is focused on cyberphysical systems, cybersecurity, privacy, wireless networks, renewable energy, power system, smart grid, autonomous vehicles, and unmanned air vehicle systems. 

He is a member of the US National Academy of Engineering, The World Academy of Sciences, and the Indian National Academy of Engineering. He was awarded a Doctor Honoris Causa by ETH, Zurich. He has received the IEEE Field Award for Control Systems, the Donald~P.~Eckman Award of the AACC,  Fred~W.~Ellersick Prize of the IEEE Communications Society, the Outstanding Contribution Award of ACM SIGMOBILE, the Infocom Achievement Award, and the SIGMOBILE Test-of-Time Paper Award. He is a Fellow of IEEE and ACM Fellow. He was Leader of the Guest Chair Professor Group on Wireless Communication and Networking at Tsinghua University, is a D. J. Gandhi Distinguished Visiting Professor at IIT Bombay, and an Honorary Professor at IIT Hyderabad. He was awarded the Distinguished Alumnus Award from IIT Madras, the Alumni Achievement Award from Washington Univ., and the Daniel Drucker Eminent Faculty Award from the College of Engineering at the Univ.~of Illinois.
\end{IEEEbiography}
\end{document}